\documentclass[11pt]{article}
%%%%%%% header %%%%%%
\usepackage[T1]{fontenc}
\usepackage{babel}
\usepackage{verbatim}
\usepackage{bm}
\usepackage{amsmath,amstext,amsthm,amssymb}
\usepackage{float}
\usepackage{xspace}
\usepackage{graphicx}
\usepackage{mathtools}
\usepackage{subcaption}
\usepackage{wrapfig}
\usepackage{fullpage}

\usepackage[dvipsnames]{xcolor}
\definecolor{DarkGray}{rgb}{0.66, 0.66, 0.66}
\definecolor{DarkPowderBlue}{rgb}{0.0, 0.2, 0.6}
\definecolor{fluorescentyellow}{rgb}{0.8, 1.0, 0.0}

\usepackage{tikz-qtree,tikz-qtree-compat}
\usepackage{tikz}

\usepackage[unicode=true,
bookmarks=false,
breaklinks=false,pdfborder={0 0 1},backref=none,colorlinks=true,allcolors=blue]
{hyperref}

\usepackage{cleveref}
\usepackage{thmtools}
\usepackage{thm-restate}
\usepackage{bbm}

\usepackage[ruled,vlined,linesnumbered,algonl]{algorithm2e}
\SetEndCharOfAlgoLine{}
\SetKwComment{Comment}{\footnotesize$\triangleright$\ }{}

\SetCommentSty{mycommfont}

\Crefname{algocf}{Algorithm}{Algorithms}
\crefname{algocfline}{line}{lines}
\Crefname{invariant}{Invariant}{Invariants}
\Crefname{claim}{Claim}{Claims}
\Crefname{subclaim}{Subclaim}{Subclaims}

\usepackage{enumitem}
\usepackage{fullpage}
\usepackage[margin=1in]{geometry}
\usepackage{nicefrac}
%\usepackage[compact]{titlesec}
%\usepackage{times}

%\usepackage{setspace}

% \makeatletter
% \setlength{\parindent}{0pt}
% \addtolength{\partopsep}{-2mm}
% \setlength{\parskip}{5pt plus 1pt}
% \textheight 9in
% \allowdisplaybreaks
% \makeatother

%%%%%%%%%%%%%%%%%%%%%%%%%%%%%% User specified LaTeX commands.

%%%%%%%%%%%%%%%%%%%%%%%%%%%%%% Textclass specific LaTeX commands.
\theoremstyle{plain}
\newtheorem{theorem}{Theorem}
\newtheorem{lemma}[theorem]{Lemma}

\newtheorem{cor}[theorem]{Corollary}
\newtheorem{remark}[theorem]{\protect\remarkname}
\theoremstyle{plain}
\theoremstyle{definition}
\newtheorem{defn}[theorem]{\protect\definitionname}
\theoremstyle{plain}
\theoremstyle{remark}

\ifx\proof\undefined
\newenvironment{proof}[1][\protect\proofname]{\par
	\normalfont\topsep6\p@\@plus6\p@\relax
	\trivlist
	\itemindent\parindent
	\item[\hskip\labelsep\scshape #1]\ignorespaces
}{%
	\endtrivlist\@endpefalse
}
\providecommand{\proofname}{Proof}
\fi

%%%%%%%%%%%%%%%%%%%%%%%%%%%%%% User specified LaTeX commands.

\usepackage{babel}
\providecommand{\definitionname}{Definition}

\providecommand{\claimname}{Claim}
\providecommand{\remarkname}{Remark}

\usepackage[textsize=tiny,textwidth=2cm,color=green!50!gray,obeyFinal]{todonotes}

\usepackage{soul}
\sethlcolor{fluorescentyellow}

\newcommand{\ie}{i.e.,\xspace}

\makeatother

\providecommand{\claimname}{Claim}
\providecommand{\definitionname}{Definition}

\newcommand{\sumL}{\sum\limits}

\newcommand{\eg}{e.g.,}

%---------letters----------

\newcommand{\EE}{\mathbb{E}}

\newcommand{\PP}{\mathbb{P}}

\newcommand{\RR}{\mathbb{R}}

\newcommand{\cB}{\mathcal{B}}

\newcommand{\cI}{\mathcal{I}}

\newcommand{\cM}{\mathcal{M}}

\newcommand{\cP}{\mathcal{P}}

\newcommand{\eps}{\varepsilon}

\newcommand{\poly}{\operatorname{poly}}
\newcommand{\opt}{\operatorname{OPT}}
\newcommand{\Var}{\operatorname{Var}}
\newcommand{\etal}[0]{\textit{et al.}}
\newcommand{\maxcut}{\operatorname{Max-Cut}}
\newcommand{\cmaxcut}{\operatorname{Constrained\;Max-Cut}}
\newcommand{\GW}{\operatorname{\sf GW}}

\definecolor{mygreen}{RGB}{10,150,110}
\definecolor{myred}{RGB}{150,10,20}
\providecommand{\email}[1]{\href{mailto:#1}{\nolinkurl{#1}\xspace}}

%%%%%%% header %%%%%%%5
\title{Max-Cut with Multiple Cardinality Constraints}
\author{Yury Makarychev\thanks{Toyota Technological Institute at Chicago (TTIC): \email{yury@ttic.edu}. Supported by NSF Awards CCF-1955173, CCF-1934843, and ECCS-2216899.} \and Madhusudhan Reddy Pittu\thanks{Carnegie Mellon University: \email{mpittu@andrew.cmu.edu}.} \and Ali Vakilian\thanks{Toyota Technological Institute at Chicago (TTIC): \email{vakilian@ttic.edu}.}}
\date{}

\begin{document}

\maketitle

\begin{abstract}
    We study the classic Max-Cut problem under multiple cardinality constraints, which we refer to as the Constrained Max-Cut problem. Given a graph $G=(V, E)$, a partition of the vertices into $c$ disjoint parts $V_1, \ldots, V_c$, and cardinality parameters $k_1, \ldots, k_c$, the goal is to select a set $S \subseteq V$ such that $|S \cap V_i| = k_i$ for each $i \in [c]$, maximizing the total weight of edges crossing $S$ (i.e., edges with exactly one endpoint in $S$).

    By designing an approximate kernel for Constrained Max-Cut and building on the correlation rounding technique of Raghavendra and Tan (2012), we present a $(0.858 - \varepsilon)$-approximation algorithm for the problem when $c = O(1)$. The algorithm runs in time $O\big(\min\{k/\varepsilon, n\}^{\poly(c/\varepsilon)} + \poly(n)\big)$, where $k = \sum_{i \in [c]} k_i$ and $n=|V|$. This improves upon the $(\frac{1}{2} + \varepsilon_0)$-approximation of Feige and Langberg (2001) for $\maxcut_k$ (the special case when $c=1, k_1 = k$), and generalizes the $(0.858 - \varepsilon)$-approximation of Raghavendra and Tan (2012), which only applies when $\min\{k,n-k\}=\Omega(n)$ and does not handle multiple constraints.

    We also establish that, for general values of $c$, it is NP-hard to determine whether a feasible solution exists that cuts all edges. Finally, we present a $1/2$-approximation algorithm for Max-Cut under an arbitrary matroid constraint.
\end{abstract}

\newpage
\section{Introduction}
\label{sec:intro}
Given an undirected graph $G = (V, E)$ on $n$ vertices and a weight function $w: E \to \RR^+$, the \emph{Max-Cut} problem seeks a subset $S \subseteq V$ maximizing
%\[
$\delta_w(S) = \sum_{\substack{u \in S, v \in V \setminus S}} w(\{u,v\})$,
%\]
the total weight of edges crossing the cut $(S, V \setminus S)$. Without loss of generality, we assume the weights are scaled so that the total edge weight satisfies $\sum_{e \in E} w(e) = 1$.

Max-Cut is a fundamental problem in combinatorial optimization and approximation algorithms, with several landmark results, most notably the seminal SDP rounding algorithm by Goemans and Williamson~\cite{goemans1995improved}, which achieves an $\alpha_{\GW} \approx 0.878$ approximation. This approximation ratio is known to be optimal under the Unique Games Conjecture (UGC)~\cite{khot2007optimal}.

%\medskip
In this work, we study a variant called \emph{Constrained Max-Cut}, where additional partition constraints are imposed on the solution.

\begin{defn}[$\cmaxcut$]\label{def:constrained-max-cut}
Given a graph $G = (V, E)$, a weight function $w: E \to \RR^+$, and a set of $c$ partition constraints $\{(V_i, k_i)\}_{i\in [c]}$ where $V = \biguplus_{i\in [c]} V_i$ and $k_i \leq |V_i|/2$ for all $i$, the \emph{Constrained Max-Cut} problem asks to find a subset $S \subseteq V$ such that $|S \cap V_i| = k_i$ for all $i \in [c]$, maximizing $\delta_w(S)$.
\end{defn}

%\medskip
Several well-studied problems are special cases of $\cmaxcut$. The  \emph{Max-Bisection} problem corresponds to $c=1$ and $k=n/2$, and admits approximation factors close to $\alpha_{\GW}$---specifically, $0.8776$~\cite{austrin2016better} (see also~\cite{frieze1997improved,Ye01,feige2001note,HZ02,feige2006rpr2,GMRSZ11,raghavendra2012approximating}).
More generally, when there is a single cardinality constraint $|S|=k$ (\ie $c=1$), the problem is known as $\maxcut_k$~\cite{feige2001approximation}. It is also referred to as $(k, n-k)$-Max-Cut in parameterized complexity, see \eg~\cite{cai2008parameterized,saurabh2018k}.

\begin{defn}[$\maxcut_k$]
Given an undirected graph $G=(V,E)$, a weight function $w: E \to \RR^+$, and an integer $k$\footnote{Assume that $k\leq n/2$ without loss of generality.}, the $\maxcut_k$ problem seeks a subset $S \subseteq V$ of cardinality exactly $k$ that maximizes $\delta_w(S)$.
\end{defn}

For $k=\Omega(n)$, the global correlation rounding technique of Raghavendra and Tan~\cite{raghavendra2012approximating} (building on~\cite{barak2011rounding}) achieves an $\alpha_{\mathrm{cc}} \approx 0.858$ approximation. Austrin and Stankovic~\cite{austrin2019global} later showed that this approximation is essentially tight for $k < 0.365n$.
However, when $k=o(n)$, existing results are weaker. Feige and Langberg~\cite{feige2001approximation} gave a $(0.5 + \eps_0)$-approximation for all $k$, where $\eps_0$ is a small universal constant ($\eps_0 < 0.09$). Moreover, the pipage rounding technique of Ageev and Sviridenko~\cite{AS04} guarantees a $0.5$-approximation for all $k$.

\medskip
The special case of $\cmaxcut$ with $c=1$ and $k = o(n)$ has applications in {\em pricing in social networks}~\cite{candogan2012optimal}, also referred to as {\em influence-and-exploit}~\cite{fotakis2014efficiency}. In this context, consumers's valuation depends directly on the usage of their neighbors in the network. Consequently, the seller's optimal pricing strategy may involve offering discounts to certain influencers who hold central positions in the underlying network. Candogan~\etal~\cite{candogan2012optimal} considered a setting with two price types: {\em full} and {\em discounted}. Specifically, the objective is to maximize the total network influence, subject to the constraint of offering $k$ discounted prices to a small target set of buyers, where $k \ll n$. Candogan~\etal~showed that the obtained problem can be reduced to $\maxcut_k$, where $k = o(n)$. 

Moreover, in certain settings where diversity among influencers is desirable, it is natural to require that the selected influencers {\em fairly represent} different groups. This requirement can be modeled as a $\cmaxcut$ problem with multiple capacity constraints. In most relevant cases, the number of groups is a small integer $c = O(1)$. 
For a comprehensive survey on {\em fair representation} for learning on graphs, see to~\cite{laclau2022survey}.
%This falls particularly within the scope of {\em fair representation}, where the goal is to ensure that the given groups, i.e., $V_1, \cdots, V_c$, are fairly represented in the solution set $S$. For a comprehensive survey on this topic, refer to~\cite{laclau2022survey}. %\alert{Ali: still need to motivate...}

\medskip
In this paper, we also introduce and study a more general version of the problem: finding a maximum cut subject to a matroid constraint.
\begin{defn}[Matroid Max-Cut]\label{def:maxcut-matroid}
    Given an undirected graph $G=(V,E)$, a weight function $w:E\rightarrow \mathbb{R}^+$, and a matroid $\cM=(V,\cI)$, the {\em matroid} Max-Cut problem is to maximize $\delta_w(S)$ subject to $S \in \cB$, where $\cB$ is the collection of bases of $\cM$. %\mnote{Do we use $\delta_w(S)$ throughout or drop the $w$?}
\end{defn}
Note that $\cmaxcut$ and $\maxcut_k$ are special cases of the matroid Max-Cut problem; we get these problems when the matroid is a partition matroid or a uniform matroid, respectively. This problem has not been explicitly studied previously. However, an algorithm by Lee, Mirrokni, Nagarajan, and Sviridenko~\cite{LMNS09} gives a $(\frac{1}{3}-\varepsilon)$-approximation to the more general problem of \emph{symmetric submodular function maximization} subject to matroid base constraint.

\subsection{Our Results and Techniques}
\label{subsec:our_results}
Our algorithm for $\maxcut_k$ builds on the global correlation rounding technique introduced by Raghavendra and Tan~\cite{raghavendra2012approximating}, which achieves an $\alpha_{cc} \approx 0.858$-approximation in the regime where $k = \Omega(n)$. We extend this approach by developing an approximate kernel and applying it in conjunction with correlation rounding, allowing us to handle the challenging case where $k = o(n)$. 

\smallskip
This yields an approximation guarantee of $(0.858 - \varepsilon)$ for $\maxcut_k$ across all values of $k$, improving upon the $(0.5 + \varepsilon_0)$-approximation of~\cite{feige2001approximation} in the sparse regime. Formally:

\begin{theorem}\label{thm:Max-Cut-k}
    For every $\varepsilon > 0$, there is an algorithm that runs in $O(\min\{k/\varepsilon, n\}^{\poly(1/\varepsilon)} + \poly(n))$ time and obtains an $(\alpha_{cc} - \varepsilon)$-approximation to the optimal $\maxcut_k$ solution, where $\alpha_{cc} \approx 0.858$ is the approximation factor of the Raghavendra--Tan algorithm.
\end{theorem}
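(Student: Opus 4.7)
The plan is to reduce to the regime $k=\Omega(n)$, where the Raghavendra--Tan (RT) algorithm already gives an $(\alpha_{cc}-\varepsilon)$-approximation. If $n\le k/\varepsilon$ (equivalently $k\ge \varepsilon n$), I would simply run RT on $(G,k)$: its running time $n^{\poly(1/\varepsilon)}$ is within the stated budget and $\min(k,n-k)=\Omega(\varepsilon n)$ places us in the dense regime that RT was designed for. So the interesting case is $k<\varepsilon n$, which I would handle by first constructing an \emph{approximate kernel}---a vertex subset $T\subseteq V$ with $|T|=N=\poly(k/\varepsilon)$ such that some $k$-subset $S'\subseteq T$ attains $\delta_w(S')\ge (1-\varepsilon)\opt$---and then running RT on a Max-Cut problem restricted to $T$.

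For the kernel itself, my approach is to solve a convex relaxation of $\maxcut_k$ (e.g., the natural LP or a low-level Lasserre SDP) to obtain fractional mass $x^{\ast}$ with $\sum_v x^{\ast}_v=k$, and to let $T$ consist of the vertices whose $x^{\ast}_v$ exceeds an appropriate threshold, possibly augmented with the endpoints of the heaviest edges as ``anchors''. The key lemma to establish is that, for the integral optimum $S^{\ast}$, every vertex in $S^{\ast}\setminus T$ can be swapped with a carefully chosen vertex in $T\setminus S^{\ast}$ so that the cumulative loss in cut value is at most $\varepsilon\cdot\opt$. A purely degree-based kernel (top vertices by weighted degree) already yields an \emph{additive} $O(\varepsilon)$ bound; upgrading this to a \emph{multiplicative} $(1-\varepsilon)$ factor, which is necessary when $\opt$ may be much smaller than the total edge weight, is precisely what forces the kernel to be calibrated against the relaxation rather than against global quantities like total degree.

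To apply RT on the kernel, I would use the decomposition
\[
\delta_w(S) \;=\; \delta_w^{G[T]}(S)+\sum_{v\in S}\ell(v),\qquad \ell(v):=\sum_{u\notin T} w(u,v),
\]
valid for any $S\subseteq T$, so that the problem reduces to maximizing a Max-Cut objective plus a known linear term on $G[T]$. This linear term is easily absorbed into the RT framework---either as an explicit linear term in the Lasserre SDP objective, or via a simple gadget attaching an edge of weight $\ell(v)$ from each $v\in T$ to a single auxiliary vertex that is forced outside $S$ (handled as a secondary partition constraint). The resulting instance has $\Theta(N)$ vertices and cardinality $k\ge\varepsilon N$, so RT yields $S\subseteq T$ with $\delta_w(S)\ge(\alpha_{cc}-\varepsilon)(1-\varepsilon)\opt$ in time $N^{\poly(1/\varepsilon)}=(k/\varepsilon)^{\poly(1/\varepsilon)}$; rescaling $\varepsilon$ then finishes the argument. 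The main obstacle I anticipate is the multiplicative kernel guarantee: naive truncations incur an \emph{absolute} error, which is acceptable only when $\opt$ is $\Omega(1)$, so the challenge is to bound the loss as a fraction of $\opt$ itself, which in turn forces the kernel construction to depend on a relaxation that already tracks $\opt$.
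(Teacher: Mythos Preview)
Your overall architecture---build a kernel of size $O(k/\varepsilon)$, collapse the rest into a single auxiliary vertex forced outside $S$, and run Raghavendra--Tan on the resulting instance---is exactly what the paper does. Your gadget with the linear term $\ell(v)$ and the auxiliary vertex is precisely the paper's super vertex $s$.

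The place where you go astray is the kernel itself. You assert that a degree-based kernel (top $k/\varepsilon$ vertices by weighted degree) only yields an \emph{additive} $O(\varepsilon)$ loss and that a relaxation-calibrated kernel is needed for a multiplicative guarantee. This is false: the paper's kernel is exactly the naive degree-based one, and it already achieves the multiplicative bound $(1-4\varepsilon)\opt$. The swap lemma (their Lemma~6) is short: if $S$ has size $k$ and contains some $j\notin H_h$ (with $h=k/\varepsilon$), pick $i\in H_h\setminus S$ minimizing $\delta(S,\{i\})$. Submodularity plus $\delta(\{i\})\ge\delta(\{j\})$ gives
\[
\delta((S-j)+i)-\delta(S)\;\ge\;-2\,\delta(S,\{i\}),
\]
and since $\delta(S)=\sum_{v\notin S}\delta(S,\{v\})\ge |H_h\setminus S|\cdot\delta(S,\{i\})$, one has $\delta(S,\{i\})\le\delta(S)/(h-k)$. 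So each swap costs at most a $\tfrac{2}{h-k}$ \emph{fraction} of the current cut value, and after at most $k$ swaps the solution lies in $H_h$ with value at least $(1-\tfrac{2k}{h-k})\opt\ge(1-4\varepsilon)\opt$. The loss is automatically proportional to $\delta(S)$, not to total edge weight, because the bound on $\delta(S,\{i\})$ comes from averaging over the cut edges of $S$ itself.

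So the ``main obstacle'' you anticipate does not exist, and the LP/SDP-thresholding kernel you propose is unnecessary (and you have not shown it works, either). Once you accept the degree-based kernel, the rest of your plan---RT on a graph of size $O(k/\varepsilon)$ with one forced-out vertex, then a small random correction to hit $|S|=k$ exactly---is correct and matches the paper.
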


The regime where $k = o(n)$ is particularly challenging, as the correlation rounding approach of Raghavendra and Tan~\cite{raghavendra2012approximating} does not extend to this setting. Our algorithm closes this gap by improving upon the $(0.5 + \varepsilon_0)$-approximation of~\cite{feige2001approximation} in the sparse regime. For completeness, we provide a brief overview of the approach of Raghavendra and Tan in \Cref{subsec:prelim}, and highlight the key reasons why it breaks down when $k = o(n)$.

\smallskip
We next address the more general setting of multiple constraints, focusing on the case $c = O(1)$.

\begin{theorem}\label{thm:Max-Cut-constant-constraints}
    For every $\varepsilon > 0$, there is an algorithm that runs in $O\big(\min\{k/\varepsilon, n\}^{\poly(c/\varepsilon)} + \poly(n)\big)$ time, where $k = \sumL_{i\in[c]} k_i$, and obtains an $(\alpha_{cc} - \varepsilon)$-approximation to the optimal solution of \textnormal{Constrained Max-Cut}. In particular, when $c = O(1)$ and $\varepsilon$ is fixed, the running time is polynomial in $n$.
\end{theorem}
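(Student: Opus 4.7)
My plan is to generalize the proof of Theorem~\ref{thm:Max-Cut-k} by (i) constructing an approximate kernel \emph{part by part} so that the reduced instance has at most $\min\{k/\varepsilon, n\}$ vertices while preserving the partition structure, and then (ii) running a correlation-rounding-based algorithm on the kernel that simultaneously respects all $c$ cardinality constraints.

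\textbf{Step 1: per-part kernelization.} I would apply the approximate kernel underlying Theorem~\ref{thm:Max-Cut-k} separately to each part $V_i$. The kernel for $\maxcut_k$ keeps only the $O(k/\varepsilon)$ vertices that are ``important'' according to a suitable weighted-degree / SDP-mass criterion and argues via a swapping argument that discarding the rest costs only an $\varepsilon$-fraction of OPT. Doing this within each $V_i$ independently yields a kernel whose $i$th part has size $\min\{|V_i|, O(k_i/\varepsilon)\}$, so the total kernel has at most $\min\{k/\varepsilon, n\}$ vertices; no edge weight mass outside the kernel can be worth more than $\varepsilon\cdot\opt$, because each part contributes at most $\varepsilon\cdot\opt/c$ of lost mass and we can absorb the factor of $c$ by rescaling $\varepsilon$.

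\textbf{Step 2: SDP with $c$ constraints and correlation rounding.} On the kernel I would solve a Lasserre/SOS relaxation of degree $t = \poly(c/\varepsilon)$ with $\{0,1\}$-valued variables $x_v$ and the constraints $\sum_{v\in V_i} x_v = k_i$ for every $i\in[c]$. The global correlation rounding lemma of Raghavendra--Tan produces a set $T$ of at most $t$ variables such that, after conditioning on an assignment $\sigma\colon T\to\{0,1\}$, the average pairwise correlation drops to $\varepsilon/c^2$. Enumerating over the $2^{|T|}$ choices of $\sigma$ costs $\min\{k/\varepsilon, n\}^{\poly(c/\varepsilon)}$ time, matching the claimed bound. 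For each $\sigma$, I independently round each remaining $x_v$ according to its conditional marginal; the standard analysis then shows that the expected cut value is within an $\varepsilon$-factor of the SDP optimum.

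\textbf{Step 3: enforcing the $c$ cardinality constraints exactly.} The expected cardinality in each part equals $k_i$, but we need equality deterministically. Using the low-correlation bound, $\Var(|S\cap V_i|)\le \varepsilon\,|V_i|/c^2$, so by Chebyshev and a union bound over the $c$ parts, with constant probability every $|S\cap V_i|$ deviates from $k_i$ by at most $O(\sqrt{\varepsilon}\,|V_i|/c)$. I would repair each part by swapping vertices in/out of $S\cap V_i$ greedily; the kernel property that no surviving vertex carries too large a fraction of the cut value caps the per-swap loss, and summing over $c$ parts keeps the total loss within $\varepsilon\cdot\opt$. Repeating for every conditioning $\sigma$ and returning the best feasible $S$ produces an $(\alpha_{cc}-O(\varepsilon))$-approximation, which after rescaling $\varepsilon$ yields the theorem.

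\textbf{Main obstacle.} The delicate step is combining the $c$ cardinality constraints with correlation rounding: the conditioning set $T$ must be large enough that \emph{every} part $V_i$ has its per-part variance driven down simultaneously, and the repair must preserve the approximation in every part at once. This forces the SDP degree to scale polynomially with $c$ (hence the $\poly(c/\varepsilon)$ in the exponent) and requires the kernel's importance bound to hold within each part rather than globally; making these two per-part guarantees compatible with a single global rounding is where the proof has to be careful.
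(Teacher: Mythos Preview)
Your high-level plan---per-part kernel, Lasserre SDP with the $c$ cardinality constraints, correlation-based conditioning, then per-part repair---is exactly the paper's. Two of the steps, however, do not work as you have written them.

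\textbf{The rounding step.} You say you will ``independently round each remaining $x_v$ according to its conditional marginal'' and that this matches the SDP value up to $\varepsilon$. That is false for Max-Cut. The SDP contribution of an edge $\{i,j\}$ is $\Pr_{\mu_{\{i,j\}}}[X_i\ne X_j]$, which can be $1$ while both marginals are $1/2$; independent rounding then cuts that edge with probability only $1/2$. Low \emph{average} mutual information over all vertex pairs says nothing about the pairs that happen to be edges, which may be a vanishing fraction of all pairs. The paper instead applies Raghavendra--Tan's Algorithm~5.3, a bias-preserving rounding that uses the SDP vectors (not just the marginals) and achieves an $\alpha_{cc}$-fraction of each edge's SDP contribution regardless of that edge's correlation; $\alpha$-independence is invoked only to bound the \emph{variance} of $|\widehat S\cap V_i|$, never to certify the cut value.

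\textbf{Block independence.} You correctly flag that every part needs its own variance bound, but driving the \emph{global} average correlation down to $\varepsilon/c^2$ does not achieve this: global $\alpha$-independence yields at best $\alpha\cdot(|\widetilde V|/|\widetilde V_\ell|)^2$-independence inside block $\ell$, which is useless when $k_\ell\ll k$. The paper's fix is not merely a larger SDP level but a different \emph{sampling distribution} in the conditioning loop: first pick a block $j\in[c]$ uniformly, then pick a vertex uniformly inside $V_j$, and condition on it. A potential-function argument (the multi-block analogue of the RT conditioning lemma) shows that $O(c^2/\alpha^2)$ such conditionings produce a solution that is $\alpha$-independent \emph{within every block simultaneously}; this new notion (``$\alpha$-block independence'') is what makes the per-part Chebyshev bound go through and is the actual source of the $\poly(c/\varepsilon)$ exponent. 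Simply enumerating more conditioning vertices drawn uniformly from $\widetilde V$ does not give it. (A smaller point: the repair step in the paper is random, not greedy, and is analyzed via a submodular sampling lemma---the kernel provides no per-vertex ``importance'' bound of the kind you invoke.)
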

More broadly, we study Max-Cut under an arbitrary matroid constraint $\mathcal{M} = (V, \mathcal{I})$, generalizing $\cmaxcut$ with an arbitrary number of partition constraints, especially when $c = \omega(1)$.

\begin{theorem}\label{thm:Max-Cut-matroid}
    There exists a $0.5$-approximation algorithm for matroid Max-Cut.
\end{theorem}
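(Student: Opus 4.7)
The plan is to combine a natural LP relaxation of matroid Max-Cut with randomized rounding on the matroid base polytope. With variables $x_v$ for $v\in V$ and $y_{uv}$ for $uv\in E$, consider the LP
\begin{align*}
    \max\ & \sum_{uv \in E} w(uv)\, y_{uv} \\
    \text{s.t.}\ & y_{uv} \leq x_u + x_v, \ \ y_{uv} \leq 2 - x_u - x_v, \ \ \forall uv \in E, \\
    & y_{uv} \in [0,1], \ \ x \in P_B(\cM),
\end{align*}
where $P_B(\cM)$ is the base polytope of $\cM$. Every integer base $S$ induces a feasible solution via $x_v = \ones[v \in S]$ and $y_{uv} = \ones[uv \in \delta(S)]$ (both inequalities become tight when $uv$ is cut), so the LP optimum $\opt_{\mathrm{LP}}$ satisfies $\opt_{\mathrm{LP}} \geq \opt$. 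The LP is solvable in polynomial time using the standard separation oracle for $P_B(\cM)$.

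Let $(x^*, y^*)$ be an optimal LP solution. I would then apply the swap rounding of Chekuri, Vondr\'ak, and Zenklusen to round $x^*$ to a random base $S$ with marginals $\Pr[v \in S] = x_v^*$ and the negative-correlation property $\Pr[u \in S,\ v \in S] \leq x_u^*\, x_v^*$ for distinct $u, v$. For any edge $uv$,
\[
\Pr[uv \in \delta(S)] \ =\ x_u^* + x_v^* - 2 \Pr[u,v \in S] \ \geq\ x_u^* + x_v^* - 2 x_u^*\, x_v^*.
\]
The analysis reduces to the elementary inequality: for all $a,b \in [0,1]$, $a + b - 2ab \geq \tfrac{1}{2}\min(a+b,\ 2-a-b)$. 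If $a + b \leq 1$, this is equivalent to $a + b \geq 4ab$, which follows from $4ab \leq (a+b)^2 \leq a+b$. If $a+b > 1$, setting $s = a+b$ and using $ab \leq s^2/4$ reduces the claim to $(s-1)(s-2) \leq 0$, which holds on $[1,2]$. Combined with the LP constraints $y_{uv}^* \leq \min(x_u^* + x_v^*,\ 2 - x_u^* - x_v^*)$, this yields $\Pr[uv \in \delta(S)] \geq y_{uv}^*/2$, and therefore $\EE[\delta_w(S)] \geq \tfrac{1}{2}\opt_{\mathrm{LP}} \geq \tfrac{1}{2}\opt$.

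I do not foresee a serious obstacle: the only nontrivial computation is the elementary inequality above, and the matroid-rounding ingredient (swap rounding with negative correlation on the base polytope) is a standard black box. The inequality is saturated at $x_u^* = x_v^* = \tfrac{1}{2}$, suggesting that $\tfrac{1}{2}$ is essentially the best ratio extractable from this relaxation, matching the approximation factor claimed in the theorem.
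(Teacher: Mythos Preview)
Your proposal is correct and follows essentially the same overall strategy as the paper: the same LP relaxation over the matroid base polytope, and the same elementary inequality $a+b-2ab \geq \tfrac{1}{2}\min\{a+b,\,2-a-b\}$ (this is exactly the paper's \Cref{lem:inequality}). The one genuine difference is the rounding step. The paper rounds $x^*$ via \emph{pipage rounding} on the base polytope (as in \cite{AS04} and \S3.2 of \cite{CCPV11}), using that the quadratic objective $\sum_{uv} w_{uv}(x_u+x_v-2x_ux_v)$ is convex along any swap direction $e_u-e_v$; this gives a \emph{deterministic} $0.5$-approximation. You instead invoke randomized swap rounding and its pairwise negative-correlation guarantee to obtain $\Pr[u,v\in S]\le x_u^*x_v^*$, which gives the same lower bound $x_u^*+x_v^*-2x_u^*x_v^*$ on the cut probability of each edge but only an \emph{expected} $0.5$-approximation. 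Both routes are short and standard; the paper's pipage argument buys determinism, while your swap-rounding argument avoids the intermediate quadratic program and the convexity observation.
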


The only prior result for this setting is by Lee~\etal~\cite{LMNS09}, who provided a $(\frac{1}{3} - \varepsilon)$-approximation for the more general problem of symmetric submodular function maximization subject to a matroid base constraint.

%\medskip
Finally, we show that for general $\cmaxcut$ with an arbitrary number of constraints, it is NP-hard to decide whether there exists a feasible solution cutting all edges. Formally:

\begin{theorem}\label{thm:Max-Cut-hardness}
    Given a graph $G = (V, E)$, a partition of vertices into $V_1, \dots, V_c$, and budget parameters $k_1, \dots, k_c$, it is NP-hard to decide whether there exists a feasible solution $S$ such that $\delta(S) = |E|$.
\end{theorem}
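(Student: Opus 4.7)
The plan is a polynomial-time reduction from $3$-Dimensional Matching (3DM), which is classically NP-hard. The key structural observation is that $\delta(S) = |E|$ can hold only when $G$ is bipartite, and in that case $S$ must pick exactly one side of each connected component's bipartition. Hence a feasible ``cut-everything'' solution exists iff one can choose a side per component so the resulting $S$ satisfies $|S\cap V_i|=k_i$ for every $i$.

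Given a 3DM instance with $|X|=|Y|=|Z|=q$ and triples $T_1,\dots,T_m$, build $G$ as the disjoint union of $m$ star gadgets: for $T_l = (x,y,z)$, gadget $l$ has a center $c_l$ adjacent to three leaves $\ell_l^{x},\ell_l^{y},\ell_l^{z}$. Define $V_e:=\{\ell_l^e: T_l\ni e\}$ for each $e\in X\cup Y\cup Z$ and $V_0:=\{c_l:l\in[m]\}$, with budgets $k_e:=1$ and $k_0:=m-q$. By the structural observation, any candidate $S$ cutting all edges is determined by a choice at each gadget between option A ($c_l\in S$, leaves out) and option B ($c_l\notin S$, leaves in); letting $M:=\{l:\text{option B}\}$, the constraints reduce to $|M|=q$ together with ``each element of $X\cup Y\cup Z$ lies in exactly one triple $T_l$ with $l\in M$'', i.e., $\{T_l\}_{l\in M}$ is a perfect 3-dimensional matching. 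The converse is immediate, completing the reduction.

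I do not anticipate a serious obstacle; the star gadget is essentially forced by the structural observation. The one minor technicality is the side-convention $k_i\le|V_i|/2$ from the definition of $\cmaxcut$: $k_e=1\le|V_e|/2$ holds whenever each element lies in at least two triples (WLOG, since any element in a single triple either forces that triple's selection or blocks feasibility and can be removed by preprocessing), and $k_0=m-q\le|V_0|/2=m/2$ can be secured by reducing from the standard NP-hard variant of 3DM in which every element appears in exactly three triples, so that $m\ge 3q\ge 2q$. The main conceptual step is the bipartite/side-choice equivalence in the first paragraph; everything else is routine bookkeeping.
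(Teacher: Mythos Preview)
Your reduction is the same as the paper's: reduce from 3-Dimensional Matching, build one star per triple with the center and three leaves, and make one part per element of $X\cup Y\cup Z$ with budget $1$. Your completeness and soundness arguments match the paper's. You are in fact more careful than the paper on one point: you explicitly place the centers in a part $V_0$, whereas the paper's write-up leaves the centers unassigned to any part.

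There is one small arithmetic slip in your handling of the side-convention $k_i\le |V_i|/2$ for the centers' part. You need $k_0=m-q\le m/2$, which is $m\le 2q$, not $m\ge 2q$; so invoking the degree-$3$ variant (which gives $m=3q$) pushes you in the wrong direction. This is easy to patch without changing the reduction: pad $V_0$ with $\max(0,m-2q)$ isolated vertices while keeping $k_0=m-q$. The isolated vertices cannot be selected in any all-edges-cut solution, since selecting $j$ of them would force only $m-q-j$ centers into $S$, hence $q+j$ gadgets in option~B, hence $\sum_e |S\cap V_e|=3(q+j)$, contradicting $\sum_e k_e=3q$ unless $j=0$. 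With this padding, $|V_0|\ge 2(m-q)=2k_0$ and the convention is met. (The paper itself does not address this convention in its hardness proof.)
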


We note that for the standard Max-Cut problem, this decision variant can be solved in polynomial time using bipartite testing.

\paragraph{Our Techniques.}
A key technical contribution of our work is the construction of an \emph{approximate kernel} for $\cmaxcut$. Specifically, for any cardinality constraint $k$, we sort the vertices by their (weighted) degrees as $v_1, \dots, v_n$, and define $\widetilde{V}$ as the top $O(k/\varepsilon)$ vertices. While the graph $G$ remains unchanged, we restrict our attention to solutions contained entirely within $\widetilde{V}$. Then, an optimal solution $\widetilde{S}$ to $\maxcut_k$ over $\widetilde{V}$ achieves a cut value that is at least a $(1 - \varepsilon)$ fraction of the true optimum. In other words,
\begin{align}
    \max_{\widetilde{S}\subseteq \widetilde{V},\,|\widetilde{S}|=k}\delta_{w}(\widetilde{S})\geq (1-\varepsilon)\cdot \max_{S\subseteq V,\,|S|=k}\delta_{w}(S).
\end{align}
See~\Cref{thm:maxcut_kernel_red_single} in~\Cref{sec:kernel} for the formal statement. 

\smallskip

This reduction is particularly useful because it allows us to focus on problem instances where $k = \Omega(n)$. Conceptually, we can contract the vertices in $V \setminus \widetilde{V}$ into a single super vertex $s$, and then restrict the solution to exclude $s$. This transforms the sparse regime into one where the effective solution size is a constant fraction of the (reduced) vertex set, enabling the use of correlation rounding techniques that require $k = \Omega(n)$.

In contrast, prior work by~\cite{saurabh2018k} uses kernelization to design fixed-parameter algorithms for $\maxcut_k$, but their parameter is the value of the optimal solution itself, and they aim for an \emph{exact} kernel. As a result, their kernel size is polynomial in $k$, which is insufficient for our purposes. Moreover, their kernelization is sequential and adaptive, while ours is non-adaptive. Our approximate kernel also extends to $\cmaxcut$ with multiple constraints ($c>1$), as formalized in~\Cref{thm:maxcut_kernel_red_double}.

\smallskip
Once we reduce to an instance with $k = \Omega(n)$, we apply the Raghavendra--Tan algorithm~\cite{raghavendra2012approximating} to obtain a subset of vertices of size $k' \in k \cdot [1 - \varepsilon, 1 + \varepsilon]$, achieving a cut value that is at least an $\alpha_{cc}$-fraction of the optimum. We then perform a \emph{random correction} step: adjusting the solution by randomly adding or removing at most $\varepsilon k$ vertices to exactly match the required size $k$, incurring only a negligible loss in cut value.

\smallskip
When $c > 1$, however, the rounding procedure of~\cite{raghavendra2012approximating} does not directly apply. To handle the setting with multiple constraints, we introduce the notion of \emph{$\alpha$-block independence} for SDP solutions, which generalizes the standard notion of $\alpha$-independence. Informally, an SDP solution is $\alpha$-block independent if, within each partition $V_i$, the average correlation between pairs of vertices is at most $\alpha$.

We first show how to efficiently construct a block-independent solution. Then, by applying the rounding algorithm of~\cite{raghavendra2012approximating}, we obtain a subset $S$ that approximately satisfies each group constraint: for each $i \in [c]$, the size $k_i' = |S \cap V_i|$ lies in the range $[(1 - \varepsilon)k_i, (1 + \varepsilon)k_i]$. Finally, we apply a random correction step within each group to enforce exact feasibility, while ensuring that the cut value degrades by only a negligible amount.

\smallskip
For matroid Max-Cut, we combine techniques from~\cite{AS04} and~\cite{CCPV11} to design a linear programming relaxation with integrality gap at most $0.5$, which can be solved efficiently. Applying pipage rounding to this relaxation yields a deterministic $0.5$-approximation algorithm.

\subsection{Preliminaries}
\label{subsec:prelim}
Our results heavily rely on the global correlation rounding technique developed in~\cite{raghavendra2012approximating}. For completeness, we include the relevant definitions and theorems in this section. A quick summary of the Lasserre hierarchy is provided in \Cref{sec:Lasserre-basics}.

Naive approaches based on variants of hyperplane rounding applied to a two-round Lasserre SDP relaxation for the $\maxcut_k$ problem can produce subsets $S$ of expected size $k$ that achieve good approximation guarantees. However, these approaches offer no control over the concentration of $|S|$ around $k$, due to potentially high correlations between the values assigned to vertices by the SDP solution. 

\paragraph{Notation.} We use $\mu = \{\mu_S\}_{|S| \leq \ell}$ to denote a level-$\ell$ Lasserre pseudo-distribution, where $\mu_S : \{-1,1\}^S \rightarrow [0,1]$ is a distribution over partial assignments to the subset $S \subseteq V$. Let $X_S$ denote the random variables jointly distributed according to $\mu_S$, and $X_i$ the marginal variable for $i \in V$ under $\mu_{\{i\}}$. We write $\PP_{\mu_S}[X_S \in A]$ to denote the pseudo-probability that the assignment to $S$ lies in the event $A \subseteq \{-1,1\}^S$. In particular, conditional pseudo-probabilities are expressed as
%\[
$\PP_{\mu_{S \cup \{i\}}}[X_i = 1 \mid X_S = \alpha]$,
%\]
which denotes the pseudo-probability that $X_i = 1$ given that $X_S = \alpha$ for some $\alpha \in \{-1,1\}^S$. 

\paragraph{SDP Relaxation.} To leverage the correlations between vertices, \cite{raghavendra2012approximating} employ an $\ell$-round Lasserre SDP for $\maxcut_k$ with a sufficiently large constant $\ell$, formally described in~\Cref{SDP:maxcut_k}. 

\begin{align}
 \label{SDP:maxcut_k}   
\max\quad & \sumL_{\{i,j\}\in E}w_{i,j}\PP_{\mu_{\{i,j\}}}[X_{\{i,j\}} \in \{(-1,1), (1,-1)\}]\\
\text{s.t.}\quad & \sumL_{i\in V}\PP_{\mu_{S\cup \{i\}}}(X_i=1 \mid X_S=\alpha)=k &&\forall S \subseteq V,\; |S|\leq \ell-1,\; \alpha \in \{0,1\}^S \nonumber\\ 
& \mu \textnormal{ is a level-}\ell \textnormal{ pseudo-distribution}. \nonumber
\end{align}

\paragraph{Measuring Correlations.} One method to assess the correlation between two random variables $X_i$ and $X_j$ is through mutual information, defined as $I_{\mu_{\{i,j\}}}(X_i; X_j) = H(X_i) - H(X_i \mid X_j)$, where $X_i$ and $X_j$ are sampled according to the local distribution $\mu_{\{i,j\}}$. An SDP solution is $\alpha$-\textit{independent} if the average mutual information between uniformly random vertex pairs is at most $\alpha$, i.e., $\mathbb{E}_{i,j \in V}[I(X_i; X_j)] \leq \alpha$.

\begin{defn}[$\alpha$-independence~\cite{raghavendra2012approximating}]
An SDP solution to an $\ell$-round Lasserre SDP is \emph{$\alpha$-independent} if $\EE_{i,j \in V}[I_{\mu_{\{i,j\}}}(X_i;X_j)] \leq \alpha$, where $\mu_{\{i,j\}}$ is the local distribution over $\{i,j\}$. More generally, if $W$ is a distribution over $V$, then the solution is $\alpha$-independent w.r.t. $W$ if $\EE_{i,j \sim W}[I_{\mu_{\{i,j\}}}(X_i;X_j)] \leq \alpha$. When unspecified, $W$ is assumed to be the uniform distribution over $V$.
\end{defn}

For many standard rounding schemes, such as halfspace rounding, the variance in the balance of the resulting cut is directly linked to the average correlation among random vertex pairs. Specifically, if the rounding scheme is applied to an $\alpha$-independent solution, the variance in the cut's balance is bounded by a polynomial function of $\alpha$.

\paragraph{Obtaining Uncorrelated SDP Solutions.} 
If all vertices in a $t$-round Lasserre SDP solution are highly correlated, conditioning on the value of one vertex reduces the entropy of the rest. Formally, if the solution is not $\alpha$-independent (i.e., $\EE_{i,j \in V}[I(X_i;X_j)] > \alpha$), then conditioning on a randomly chosen vertex $i$ and its value $b$ decreases the average entropy of the remaining variables by at least $\alpha$. Repeating this process $1/\alpha$ times suffices to obtain an $\alpha$-independent solution. Thus, starting from a $t$-round Lasserre SDP solution, this process results in a $(t - \ell)$-round $\alpha$-independent solution for some $\ell = O(1/\alpha)$.

\paragraph{Rounding Uncorrelated SDP Solutions.}
Given an $\alpha$-independent SDP solution, many natural rounding schemes ensure that the balance of the output cut is concentrated around its expectation. Hence, it suffices to construct rounding schemes that preserve the expected balance. Raghavendra and Tan~\cite{raghavendra2012approximating} present a simple rounding procedure that preserves the individual bias of each vertex, thereby ensuring the global balance property.

An elegant probabilistic argument from~\cite{raghavendra2012approximating} shows how to convert an $(\ell+4/\alpha^2+1)$-round Lasserre SDP solution into an $\alpha$-independent $\ell$-round solution, while losing only an additive factor of $\alpha$ in the objective value (assuming the optimum is normalized to at most $1$).

\begin{lemma}[\cite{raghavendra2012approximating}]
\label{lem:sampling_correlation_reduction}
There exists $t \leq k$ such that 
%\[
$\EE_{i_1,\dots,i_t \sim W}\EE_{i,j\sim W}\left[I(X_i;X_j \mid X_{i_1},\dots,X_{i_t})\right] \leq \frac{1}{k-1}$.
%\]
\end{lemma}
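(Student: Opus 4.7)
The plan is a standard ``entropy decrement'' argument, treating iterated conditioning as a potential-decreasing process. Define the potential
\[
\Phi(t) \;=\; \EE_{i_1,\dots,i_t\sim W}\,\EE_{i\sim W}\bigl[H(X_i \mid X_{i_1},\dots,X_{i_t})\bigr],
\]
the expected pseudo-entropy of a random vertex after conditioning on $t$ independently sampled ``probe'' vertices. Because each $X_i$ is a $\pm 1$-valued pseudo-variable, $\Phi(0)\le 1$, and because conditional entropy is non-negative on local distributions, $\Phi(t)\ge 0$ for all $t$. So $\Phi$ lies in $[0,1]$ and is monotonically non-increasing (which itself follows from the identity below).

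The key identity is the chain rule: for every fixed conditioning tuple $i_1,\dots,i_t$ and every pair $i,j$,
\[
I(X_i;X_j\mid X_{i_1},\dots,X_{i_t}) \;=\; H(X_i\mid X_{i_1},\dots,X_{i_t}) \;-\; H(X_i\mid X_{i_1},\dots,X_{i_t},X_j).
\]
Taking $\EE_{i_1,\dots,i_t\sim W}$ and $\EE_{i,j\sim W}$, and then observing that since $j$ is itself drawn from $W$ the second term equals $\Phi(t+1)$ (just rename $j$ to $i_{t+1}$), while the first term equals $\Phi(t)$, we obtain the clean telescoping identity
\[
\EE_{i_1,\dots,i_t\sim W}\,\EE_{i,j\sim W}\bigl[I(X_i;X_j\mid X_{i_1},\dots,X_{i_t})\bigr] \;=\; \Phi(t)-\Phi(t+1).
\]

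Summing this over $t=0,1,\dots,k-1$ gives $\sum_{t=0}^{k-1}(\Phi(t)-\Phi(t+1)) = \Phi(0)-\Phi(k)\le 1$, so by pigeonhole there exists $t\in\{0,1,\dots,k-1\}$ for which $\Phi(t)-\Phi(t+1)\le 1/k\le 1/(k-1)$, which is exactly the claim. (One can even strengthen $1/(k-1)$ to $1/k$ with this argument.)

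The only real obstacle is bookkeeping: to make all of the above mutual informations and conditional entropies well-defined on the pseudo-distribution, we need the underlying Lasserre solution to support the joint local distributions on every sub-tuple of size up to $t+2 \le k+2$. So the lemma should be understood as being applied to a Lasserre solution of level at least $k+2$ (or whatever constant the surrounding argument needs), and the chain rule identity used above is then a formal identity on genuine local distributions that the Lasserre solution provides. Apart from this setup step, the proof is a one-line telescoping argument on the potential $\Phi(t)$.
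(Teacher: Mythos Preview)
Your argument is correct and is essentially the same entropy-decrement/telescoping argument the paper uses (see the proof of \Cref{lem:sampling_correlation_reduction_block}, the block generalization of this lemma): define the potential $\Phi(t)=\EE_{i_1,\dots,i_t,i}[H(X_i\mid X_{i_1},\dots,X_{i_t})]$, use the chain rule to identify $\Phi(t)-\Phi(t+1)$ with the averaged conditional mutual information, and then average the telescoping sum $\Phi(0)-\Phi(k)\le 1$. Your remark about needing enough Lasserre levels for the local distributions to be well-defined is also on point.
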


\Cref{lem:sampling_correlation_reduction} implies the existence of a $t \leq 1/\alpha + 1$ such that conditioning on the joint assignment to $t$ randomly sampled vertices reduces the average mutual information between other pairs to at most $\alpha$.

\begin{theorem}[\cite{raghavendra2012approximating}]
\label{thm:valuepreserving_independent_single}
For every $\alpha > 0$ and integer $\ell$, there exists an algorithm running in time $O(n^{\poly(1/\alpha) + \ell})$ that finds an $\alpha$-independent solution to the $\ell$-round Lasserre SDP, with objective value at least $\opt - \alpha$, where $\opt$ is the optimum SDP value.
\end{theorem}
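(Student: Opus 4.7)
The strategy is the iterative conditioning technique that underlies \Cref{lem:sampling_correlation_reduction}. I would first solve the $(\ell + t)$-round Lasserre SDP for $t = O(1/\alpha)$, obtaining a pseudo-distribution $\mu$ of value $\opt$. The goal is then to locate a subset $S \subseteq V$ with $|S| \leq t$ and an assignment $\beta \in \{-1,1\}^S$ such that the conditional pseudo-distribution $\mu' := \mu(\,\cdot \mid X_S = \beta)$ is (i) a valid $\ell$-round Lasserre solution, (ii) $\alpha$-independent, and (iii) retains objective value at least $\opt - \alpha$. The algorithm is brute force: enumerate all $\binom{n}{t}\cdot 2^t = n^{O(1/\alpha)}$ pairs $(S,\beta)$, form $\mu'$ (valid since conditioning on $|S|$ coordinates of a level-$(\ell+t)$ pseudo-distribution produces a level-$\ell$ pseudo-distribution satisfying the same global constraints), verify the two properties, and return any witness. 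The total running time is then $n^{\poly(1/\alpha)+\ell}$ as claimed.

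The existence of such a witness rests on two expectation identities under the random choice where $S$ is drawn uniformly and $\beta \sim \mu_S$. First, \Cref{lem:sampling_correlation_reduction} applied with $k = \lceil 1/\alpha\rceil + 1$ gives some $t^* \leq t$ for which the expected average mutual information in $\mu'$ is at most $\alpha$. Second, the expected objective value of $\mu'$ equals the unconditioned objective value of $\mu$, because averaging the edge indicator pseudo-probabilities $\PP_{\mu'}[X_{\{i,j\}} \in \{(-1,1),(1,-1)\}]$ over $\beta \sim \mu_S$ recovers the corresponding pseudo-probability under $\mu$; this is simply the tower property for pseudo-distributions.

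The main obstacle is combining these two expectation statements into a single concrete pair $(S,\beta)$ that simultaneously achieves $\alpha$-independence and objective $\geq \opt - \alpha$, rather than each bound holding only marginally. I would handle this with a Lagrangian/averaging argument: set
\[
Z_{S,\beta} := \mathrm{obj}(\mu') + \lambda\bigl(\alpha - I_{\mathrm{avg}}(\mu')\bigr)
\]
for a suitably chosen constant $\lambda = \Theta(1)$; since $\EE[Z_{S,\beta}] \geq \opt$ and $\mathrm{obj}(\mu') \leq 1$ by normalization, some $(S,\beta)$ achieves $Z_{S,\beta} \geq \opt$, which after rearranging (possibly with an $O(\alpha)$-independence condition and then rescaling the target $\alpha$) yields the desired pair. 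Alternatively, one can run the iterative version: greedily condition one vertex at a time, halting once $\alpha$-independence holds; since each conditioning step reduces the average entropy $\EE_i[H(X_i)]$ by at least $\alpha$ whenever the solution is not yet $\alpha$-independent, and entropies are bounded by $1$, the process halts within $O(1/\alpha)$ steps in expectation, and enumerating all choice sequences recovers one such halt state with the objective preserved in expectation.
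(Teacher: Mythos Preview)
Your high-level plan---solve a higher-round SDP, brute-force over all conditioning sets $(S,\beta)$, and output one that is both $\alpha$-independent and near-optimal---is exactly the paper's (and Raghavendra--Tan's) approach, and your two expectation facts are the right ingredients. The gap is in the combination step.

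With $t=O(1/\alpha)$ you only obtain $\EE_{S,\beta}[I_{\mathrm{avg}}(\mu')]\leq\alpha$, and your Lagrangian $Z=\mathrm{obj}+\lambda(\alpha-I_{\mathrm{avg}})$ does not separate the two conditions. A realization with $Z\geq\opt$ satisfies $\mathrm{obj}-\opt\geq\lambda(I_{\mathrm{avg}}-\alpha)$; using $\mathrm{obj}\leq 1$ this yields only $I_{\mathrm{avg}}\leq\alpha+(1-\opt)/\lambda$, while the value bound degrades to $\mathrm{obj}\geq\opt-\lambda\alpha$ (take $I_{\mathrm{avg}}=0$). No constant $\lambda$ makes both quantities $O(\alpha)$ when $\opt$ is bounded away from $1$, and pushing $\lambda=\Theta(1/\alpha)$ destroys the value guarantee. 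The iterative variant has the same defect: halting at the first $\alpha$-independent state gives no control on the objective at that particular state.

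The resolution, which the paper spells out (see the sentence just before \Cref{lem:sampling_correlation_reduction} and the proof of \Cref{thm:valuepreserving_independent_double} for the block version), is to spend more rounds: take $t=O(1/\alpha^{2})$ so that \Cref{lem:sampling_correlation_reduction} yields $\EE[I_{\mathrm{avg}}]\leq\alpha^{2}/4$. Then Markov gives $\Pr[I_{\mathrm{avg}}>\alpha]\leq\alpha/4$, while Markov on the nonnegative variable $1-\mathrm{obj}$ gives $\Pr[\mathrm{obj}<\opt-\alpha]\leq(1-\opt)/(1-\opt+\alpha)\leq 1/(1+\alpha)$. Since $\alpha/4+1/(1+\alpha)<1$ for $\alpha\leq 1$, a union bound yields a single $(S,\beta)$ satisfying both requirements. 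The running time is still $n^{\poly(1/\alpha)+\ell}$, so the theorem statement is unaffected; the only change needed in your write-up is to invoke \Cref{lem:sampling_correlation_reduction} with $k=\Theta(1/\alpha^{2})$ rather than $\Theta(1/\alpha)$ and replace the Lagrangian paragraph with the two-Markov-plus-union-bound argument.
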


\Cref{thm:valuepreserving_independent_single} implies that there exists $t = O(1/\alpha^2)$ such that conditioning on $t$ vertices yields an $\alpha$-independent solution with probability at least $\alpha/2$. Since the sampling procedure preserves the marginal biases of the vertices, the SDP objective remains close to optimal in expectation. By Markov’s inequality, the value of the conditioned solution is at least $\opt - \alpha$ with probability at least $1/(1+\alpha)$. Thus, there exists a small subset of vertices such that conditioning on them yields an $\alpha$-independent solution with near-optimal value.

Algorithm 5.3 of \cite{raghavendra2012approximating} is a rounding scheme that preserves the bias (according to the SDP solution) of every vertex while also approximately preserving the pairwise correlations up to polynomial factors. Using numerical techniques, they show that the probability of an edge being cut is at least $\alpha_{cc} \approx 0.858$ times its contribution to the SDP objective, implying that the total cut value is at least $\alpha_{cc}$ times the SDP value.

\paragraph{Controlling Cut Balance.} 

\begin{theorem}[\cite{raghavendra2012approximating}]
\label{thm:balance_variance}
Given an $\alpha$-independent solution to two rounds of the Lasserre SDP, let $\{y_i\}_{i \in V}$ denote the rounded output from Algorithm 5.3. Let $S = \EE_{i \sim W}[y_i]$ be the expected balance of the cut. Then,
%\[
$\Var(S) \leq O(\alpha^{1/12})$.
%\]
\end{theorem}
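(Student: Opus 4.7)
The plan is to decompose $\Var(S)$ into an average of pairwise covariances of the rounded variables, bound each such covariance by a polynomial function of the pseudo-mutual information between the corresponding SDP variables, and then conclude using $\alpha$-independence together with Jensen's inequality. Since the only randomness in $(y_i)_{i\in V}$ comes from Algorithm~5.3 while $i,j$ are averaged over the fixed distribution $W$, I would first write
\[
\Var(S) \;=\; \Var\!\Bigl(\EE_{i\sim W}[y_i]\Bigr) \;=\; \EE_{i,j\sim W}\!\bigl[\,\mathrm{Cov}(y_i, y_j)\,\bigr],
\]
reducing the task to controlling $\mathrm{Cov}(y_i, y_j)$ for a typical pair $(i,j)$.

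Next, I would invoke the key structural property of Algorithm~5.3: it preserves each vertex's individual bias exactly, and preserves pairwise correlations up to a polynomial factor. Concretely, letting $\rho_{ij}$ denote the pseudo-covariance of $(X_i, X_j)$ under $\mu_{\{i,j\}}$, one obtains an estimate of the form $|\mathrm{Cov}(y_i, y_j)| \leq C\,|\rho_{ij}|^{c_1}$ for universal constants $C, c_1 > 0$. I would then bridge to mutual information via Pinsker's inequality: since for $\{-1,1\}$-valued variables $|\rho_{ij}|$ is at most twice the total variation distance between $\mu_{\{i,j\}}$ and $\mu_{\{i\}}\otimes\mu_{\{j\}}$, Pinsker yields $|\rho_{ij}| \leq \sqrt{2\,I_{\mu_{\{i,j\}}}(X_i; X_j)}$. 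Combining the two estimates gives
\[
|\mathrm{Cov}(y_i, y_j)| \;\leq\; C'\cdot I_{\mu_{\{i,j\}}}(X_i; X_j)^{c_1/2}.
\]

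Averaging over $(i,j)\sim W$ and applying Jensen's inequality to the concave map $x\mapsto x^{c_1/2}$ (valid since $c_1/2 \in (0,1]$), I would then obtain
\[
\Var(S) \;\leq\; C'\cdot \bigl(\EE_{i,j\sim W}[I(X_i;X_j)]\bigr)^{c_1/2} \;\leq\; C'\cdot \alpha^{c_1/2},
\]
where the last inequality uses the $\alpha$-independence hypothesis. Tuning $c_1 = 1/6$, as dictated by the polynomial correlation bound in Algorithm~5.3, reproduces the claimed exponent $1/12$. The main obstacle is precisely this quantitative correlation-preservation step: it requires dissecting Algorithm~5.3 and tracking how small perturbations in the SDP vector configuration propagate through its bias-preserving construction. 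The final exponent $1/12$ emerges from composing the polynomial correlation loss of the rounding, the square-root loss from Pinsker, and the concavity loss from Jensen.
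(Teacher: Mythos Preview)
The paper does not give its own proof of this theorem: it is stated with an explicit citation to Raghavendra and Tan and is used throughout as a black box (see the preliminaries section, where \Cref{thm:balance_variance} is simply quoted from~\cite{raghavendra2012approximating}). There is therefore no in-paper argument to compare your proposal against.

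That said, your outline is a faithful high-level reconstruction of the Raghavendra--Tan argument. The decomposition $\Var(S)=\EE_{i,j\sim W}[\mathrm{Cov}(y_i,y_j)]$, the use of the bias- and correlation-preservation properties of Algorithm~5.3 to control $\mathrm{Cov}(y_i,y_j)$ in terms of the SDP pairwise data, the Pinsker step linking pseudo-covariance to mutual information, and the final Jensen/concavity averaging are exactly the ingredients of the original proof. You also correctly flag the real technical content: the quantitative ``correlations are preserved up to a polynomial'' statement for Algorithm~5.3 is where the specific exponent is generated, and it requires a careful case analysis of the rounding map rather than a one-line bound. Since the present paper only cites the result, your sketch is as close to ``matching the paper'' as is possible here.
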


By applying Chebyshev’s inequality to \Cref{thm:balance_variance}, the number of vertices in the cut lies in the range $k \pm n \cdot O(\alpha^{1/24})$ with high probability. When $k/n = \Omega(1)$, we can choose $\alpha = \Omega(1)$ small enough so that the relative deviation is within $\varepsilon k$. A post-processing step can then adjust the set size to exactly $k$ (e.g., by swapping a small number of vertices), which incurs only an $O(\varepsilon)$ fractional loss in cut value. However, when $k = o(n)$, the additive error term $n \cdot O(\alpha^{1/24})$ may significantly exceed $k$, making it difficult to ensure cardinality feasibility without substantially affecting the objective.

\paragraph{Notation.} For any subset $S \subseteq V$ and vertex $i \in V$, we write $S + i := S \cup \{i\}$ and $S - i := S \setminus \{i\}$. Let $G = (V, E)$ be a graph with non-negative edge weights $w : E \rightarrow \mathbb{R}_{\geq 0}$. For a parameter $r \in \mathbb{N}$, let $H_r \subseteq V$ denote the set of the $r$ highest-degree vertices in $G$ under weight function $w$. If the vertex set $V$ is partitioned into $c$ disjoint groups, $V = \biguplus_{i \in [c]} V_i$, then $H_r^{(i)} \subseteq V_i$ denotes the $r$ highest-degree vertices in part $V_i$. When the weight function is clear from context, we abbreviate the weighted degree of a vertex $v$ as $\delta(v)$ instead of $\delta_w(v)$.

\section{Approximate Kernels for Max-Cut with Cardinality Constraints}
\label{sec:kernel}

In \Cref{subsec:kernel_max_cut_k}, we show that for any instance $(G, k)$ of $\maxcut_k$, one can reduce the graph to a (conditioned) instance $(\widetilde{G}, k)$ with $|\widetilde{V}| = O(k/\varepsilon)$ vertices. In \Cref{subsec:kernel_constrained_max_cut}, we generalize this construction to the setting with multiple partitions. Specifically, for any instance $(G = (V = \biguplus_{i \in [c]} V_i, E), (k_1, \dots, k_c))$ of $\cmaxcut$, we construct a conditioned instance $(\widetilde{G} = (\widetilde{V} = \biguplus_{i \in [c]} \widetilde{V}_i, \widetilde{E}), (k_1, \dots, k_c))$ such that for every $i \in [c]$, we have $|\widetilde{V}_i| = O(k_i / \varepsilon)$. 

\smallskip
We use $\opt$ to denote the optimal cut value of a given instance. For the single-group problem $\maxcut_k$ on a graph $G = (V, E)$, we define
%\[
$\opt := \max_{S \subseteq V,\ |S| = k} \delta(S)$.
%\]
For the multi-group case $\cmaxcut$ with partition $V = \biguplus_{i \in [c]} V_i$ and size constraints $k_1, \dots, k_c$, the optimal value is
%\[
$\opt := \max_{\substack{S = \biguplus_{i \in [c]} S_i \\ S_i \subseteq V_i,\ |S_i| = k_i}} \delta(S)$.
%\]

\subsection{Approximate Kernel for $\maxcut_k$}
\label{subsec:kernel_max_cut_k}

\paragraph{Kernel Procedure for $\maxcut_k$.}
We now describe the approximate kernel construction for the single-group $\maxcut_k$ problem.

\medskip
\noindent
\textbf{Input:} Graph $G = (V, E)$, cardinality parameter $k$, and approximation parameter $0 < \varepsilon\leq 1/2$.\\
\textbf{Output:} Reduced graph $\widetilde{G} = (\widetilde{V}, \widetilde{E})$.

\begin{enumerate}[leftmargin=*]
    \item If $k / \varepsilon \ge n$, return $G$.
    \item Otherwise, sort the vertices of $G$ in decreasing order of weighted degree, and retain only the top $k / \varepsilon$ vertices. Merge the rest of vertices into a super vertex $s$, and return the resulting graph $\widetilde{G}$.
\end{enumerate}

Note that the super vertex $s$ appears in the output graph $\widetilde{G}$ only if $k/\varepsilon + 1 \le n$.
\begin{theorem}
\label{thm:maxcut_kernel_red_single}
For any $\maxcut_k$ instance $(G, k)$, let $(\widetilde{G}, k)$ be the reduced instance returned by the \hyperref[subsec:kernel_max_cut_k]{$\maxcut_k$ kernel} procedure above. Then the optimal cut value of $\maxcut_k$ on $\widetilde{G}$, conditioned on not selecting the super vertex $s$, satisfies
\[
\widetilde{\opt} := \max_{S \subseteq \widetilde{V} \setminus \{s\},\, |S| = k} \delta(S) \ge (1 - 4\varepsilon) \cdot \opt.
\]
\end{theorem}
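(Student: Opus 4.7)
The plan is as follows. If $k/\varepsilon\ge n$, the kernel returns $G$ and the claim is immediate, so I assume $r:=k/\varepsilon<n$ and $\widetilde{V}\setminus\{s\}=H_r$. I will construct a feasible candidate $\widetilde{S}\subseteq H_r$ of size $k$ by a swap argument: fix any optimum $S^*$ with $\delta(S^*)=\opt$, split $S^*=A\uplus R$ with $A:=S^*\cap H_r$ and $R:=S^*\setminus H_r$, and replace $R$ by a carefully chosen $T\subseteq H_r\setminus A$ of the same size, setting $\widetilde{S}:=A\cup T$.

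Two degree estimates drive the analysis, both exploiting the normalization $\sum_e w(e)=1$ (so $\sum_{v\in V}\delta(v)=2$). First, since the $r$ largest degrees sum to at most $2$, the minimum degree inside $H_r$ satisfies $\delta(v_r)\le 2/r=2\varepsilon/k$; every $v\in R$ lies outside $H_r$, so $\delta(v)\le\delta(v_r)$, and thus $\sum_{v\in R}\delta(v)\le|R|\cdot(2\varepsilon/k)\le 2\varepsilon$. Second, I take $T$ to be the $|R|$ vertices of smallest weighted degree in the pool $H_r\setminus A$ (of size $m:=r-|A|$). An averaging argument gives $\sum_{t\in T}\delta(t)\le(|R|/m)\sum_{v\in H_r\setminus A}\delta(v)$; since $|R|/m=(k-|A|)/(k/\varepsilon-|A|)$ is decreasing in $|A|$ and equals $\varepsilon$ at $|A|=0$, while $\sum_{v\in H_r\setminus A}\delta(v)\le 2$, this yields $\sum_{t\in T}\delta(t)\le 2\varepsilon$. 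Combining these with the elementary cut-perturbation bound $|\delta(X)-\delta(Y)|\le\sum_{v\in X\triangle Y}\delta(v)$ (each edge whose cut status differs between $X$ and $Y$ has exactly one endpoint in $X\triangle Y$), applied to $X=\widetilde{S}$, $Y=S^*$, and $\widetilde{S}\triangle S^*=R\cup T$, I obtain $|\delta(\widetilde{S})-\opt|\le 4\varepsilon$.

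The main obstacle is upgrading this clean additive estimate into the multiplicative bound $(1-4\varepsilon)\opt$ stated in the theorem: the raw degree sums control losses against the total edge weight $W=1$, whereas the theorem asks for them to be controlled by $\opt$, which can a priori be much smaller. I expect the fix is to charge each loss term to a matching piece of $\opt$ rather than to a degree sum. Concretely, edges from $A$ to $H_r\setminus A$ are cut by $S^*$ (since $H_r\setminus A\subseteq V\setminus S^*$), so their total weight is bounded by $\opt$; under a uniformly random choice of $T$ the contribution of such edges to the loss carries an extra factor $|T|/m\le\varepsilon$, yielding a term at most $\varepsilon\cdot\opt$. The cut edges incident to $R$ that are lost form exactly the $R$-contribution to $\opt$; offsetting these against the newly gained cut edges between $A$ and $R$ (which become cut once $R$ is removed) and between $T$ and its complement in $\widetilde{V}\cup\{s\}$ should replace the remaining additive $\varepsilon$ terms by multiples of $\opt$, closing the gap to $(1-4\varepsilon)\opt$.
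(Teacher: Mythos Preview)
Your additive bound is correct, and you correctly identify that upgrading it to $(1-4\varepsilon)\opt$ is the real issue; but your proposed fix has a gap. Writing $\widetilde{S}=A\cup T$, $S^*=A\cup R$, and using $\delta(X)=\sum_{x\in X}\delta(x)-2w_{\mathrm{in}}(X)$ (where $w_{\mathrm{in}}(X)$ is the weight of edges inside $X$), the exact change is
\[
\delta(\widetilde{S})-\opt \;=\; (D_T-D_R)\;-\;2w_{\mathrm{in}}(T)\;+\;2w_{\mathrm{in}}(R)\;-\;2w(A,T)\;+\;2w(A,R),
\]
so $\delta(\widetilde{S})\ge\opt-2w(A,T)-2w_{\mathrm{in}}(T)$. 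Your random-$T$ charging handles $w(A,T)$: since $A\subseteq S^*$ and $H_r\setminus A\subseteq V\setminus S^*$, we have $w(A,H_r\setminus A)\le\opt$, whence $\EE[w(A,T)]\le(|R|/m)\cdot\opt\le\varepsilon\cdot\opt$. But the term $w_{\mathrm{in}}(T)$---the weight of edges \emph{among the newly inserted vertices}---has both endpoints outside $S^*$ and therefore cannot be charged to any cut edge of $S^*$. With random $T$ you only get $\EE[w_{\mathrm{in}}(T)]\le(|R|/m)^2\cdot w_{\mathrm{in}}(H_r\setminus A)\le\varepsilon^2$, which is additive in the total edge weight, not multiplicative in $\opt$. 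Your ``offsetting'' language hides exactly this: the gain $w(T,V\setminus\widetilde{S})$ falls short of $D_T$ by $2w_{\mathrm{in}}(T)+w(T,A)$, so after invoking $D_T\ge D_R$ the $w_{\mathrm{in}}(T)$ term remains unaccounted for. (One could rescue the argument by separately proving $\opt\ge\varepsilon$ via a random $k$-subset of $H_r$, but that is an extra ingredient you have not supplied, and the constant degrades.)

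The paper avoids this obstruction by swapping one vertex at a time: at step $t$ it picks $i\in H_r\setminus S_{t-1}$ minimizing the edge weight $\delta(S_{t-1},\{i\})$ into the \emph{current} set---not the degree $\delta(i)$. By averaging over $H_r\setminus S_{t-1}$ this gives $\delta(S_{t-1},\{i\})\le\delta(S_{t-1})/(r-k)$, and a short submodularity calculation (together with $\delta(i)\ge\delta(j)$) yields the multiplicative step $\delta(S_t)\ge\bigl(1-2/(r-k)\bigr)\,\delta(S_{t-1})$ directly; compounding over at most $k$ steps gives $(1-4\varepsilon)\opt$. The iterative, $S$-dependent choice is precisely what absorbs the analogue of $w_{\mathrm{in}}(T)$: once $i_t$ enters the set, subsequent choices $i_{t'}$ are penalized for their edges to $i_t$ as well, so internal $T$-edges never accumulate uncontrolled.
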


\begin{proof}
Let $h := k / \varepsilon$, and recall from the notation in the preliminaries that $H_h$ is the set of the $h$ highest-degree vertices in $G$. Let $S^*$ be an optimal solution of size $k$ in $G$ with cut value $\delta(S^*) = \opt$. We will construct a set $S_T \subseteq H_h$ of size $k$ with value close to $\opt$.

We iteratively transform $S^*$ into a set within $H_h$ by applying \Cref{lem:local_exchange} up to $k$ times. At each step $t$, we replace a vertex $j_t \in S_{t-1} \setminus H_h$ with a vertex $i_t \in H_h \setminus S_{t-1}$ such that:
\[
\delta(S_t) \ge \left(1 - \frac{2}{h - k}\right) \cdot \delta(S_{t-1}).
\]
Since each step increases $|S_t \cap H_h|$ by one, the process terminates in at most $T \le k$ steps. Therefore,
\[
\delta(S_T) \ge \opt \cdot \left(1 - \frac{2}{h - k}\right)^T \ge \opt \cdot \left(1 - \frac{2T}{h - k}\right) \ge \opt \cdot \left(1 - \frac{2k}{h - k}\right).
\]
Substituting $h = k / \varepsilon$, we get
\[
\delta(S_T) \ge \opt \cdot \left(1 - \frac{2\varepsilon}{1 - \varepsilon}\right) \ge \opt \cdot (1 - 4\varepsilon),
\]
 where the final bound assumes $\varepsilon \le 1/2$.
\end{proof}

\begin{lemma}
\label{lem:local_exchange}
Let $S \subseteq V$ be a subset of size $|S|=k \le h$ such that $S \setminus H_h \ne \emptyset$. Then there exist vertices $i \in H_h \setminus S$ and $j \in S \setminus H_h$ such that:
\[
\delta\big((S - j) + i\big) \ge \left(1 - \frac{2}{h - k}\right) \cdot \delta(S).
\]
\end{lemma}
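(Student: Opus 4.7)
The plan is to exhibit the desired swap via an averaging argument over a uniformly random pair $(i,j) \in A \times B$, where $A := H_h \setminus S$ and $B := S \setminus H_h$. Setting $C := S \cap H_h$, we have $|A| = h - |C|$ and $|B| = k - |C|$, so $|A| = |B| + (h - k) \ge h - k$, and $B \neq \emptyset$ by hypothesis.

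First I would derive a closed form for the change in cut value induced by a swap. A direct case analysis over edges incident to $i$ or $j$ (the edge $\{i,j\}$, if present, stays cut, while the remaining four categories of edges flip their cut status in a predictable way) yields, for every $i \in A$ and $j \in B$,
\[
\delta\big((S-j)+i\big) - \delta(S) \;=\; \big(d(i) - d(j)\big) \;+\; 2\big(\delta(j, S) - \delta(i, S)\big) \;+\; 2 w_{ij}.
\]
Taking the expectation over the uniform pair $(i,j) \in A \times B$, three of the four contributions are easy to dispatch: $\EE[d(i)] \ge \EE[d(j)]$ because every vertex of $A \subseteq H_h$ has weighted degree at least as large as every vertex of $B \subseteq V \setminus H_h$; while $\EE[w_{ij}] \ge 0$ and $\EE[\delta(j, S)] \ge 0$ by non-negativity of the weights.

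The crux of the argument is the upper bound on $\EE[\delta(i, S)]$. Since $A$ is disjoint from $S$, every edge from a vertex of $A$ to a vertex of $S$ crosses the cut $(S, V \setminus S)$, so $\sum_{i \in A} \delta(i, S) \le \delta(S)$. Combined with $|A| \ge h - k$, this yields $\EE[\delta(i, S)] \le \delta(S)/(h-k)$. Substituting back into the averaged identity gives
\[
\EE_{(i,j) \in A \times B}\!\big[\delta((S-j)+i)\big] \;\ge\; \delta(S) \;-\; \frac{2\,\delta(S)}{h - k} \;=\; \Big(1 - \frac{2}{h-k}\Big)\,\delta(S),
\]
so at least one pair $(i^*, j^*)$ attains (or exceeds) this bound, proving the lemma.

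The main conceptual step is spotting the inequality $\sum_{i \in H_h \setminus S} \delta(i, S) \le \delta(S)$: this is the only place where the construction of $H_h$ as being disjoint from the ``bad'' part $B$ of $S$ is exploited, and it is what converts a potential swap loss into an $O(1/(h-k))$ fraction of the current cut. The derivation of the exact swap increment is routine bookkeeping but must carefully account for the edge $\{i,j\}$, whose cut status does not change under the swap and which must therefore not be double-counted in the $\delta(i,S)$ and $\delta(j,S)$ terms.
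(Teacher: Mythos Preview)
Your proof is correct and follows essentially the same approach as the paper: both expand the swap increment into $(d(i)-d(j))$ plus cross-cut terms, drop the non-negative contributions, and bound the loss $2\delta(i,S)$ by $2\delta(S)/(h-k)$ via the key observation $\sum_{i\in H_h\setminus S}\delta(i,S)\le\delta(S)$. The only cosmetic differences are that the paper invokes submodularity to get the increment inequality (you compute it directly as an identity) and chooses $i$ to minimize $\delta(S,\{i\})$ with $j$ arbitrary (you average over all pairs and extract a good one); neither variation changes the argument.
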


\begin{proof}
Since $S \setminus H_h \ne \emptyset$ and $|S| \le h$, we have $H_h \setminus S \ne \emptyset$. Let $i \in H_h \setminus S$ be the vertex minimizing $\delta(S, \{i\})$, the total weight of edges between $i$ and $S$. Let $j$ be any vertex in $S \setminus H_h$.

We use the submodularity of the cut function:
\[
\delta((S - j) + i) + \delta(S) \ge \delta(S + i) + \delta(S - j).
\]
Rearranging:
\begin{align*}
\delta((S - j) + i) - \delta(S) 
&\ge \left[\delta(S + i) - \delta(S)\right] + \left[\delta(S - j) - \delta(S)\right] \\
&= \big(\delta(\{i\}) - 2\delta(S, \{i\})\big) - \big(\delta(\{j\}) - 2\delta(S - j, \{j\})\big) \\
&= \big(\delta(\{i\}) - \delta(\{j\})\big) + 2\delta(S - j, \{j\}) - 2\delta(S, \{i\}) \\
&\ge -2\delta(S, \{i\}).
\end{align*}

Now we bound $\delta(S, \{i\})$. Since $i$ minimizes $\delta(S, \cdot)$ among $H_h \setminus S$, we have:
\[
\delta(S) = \sum_{v \in V \setminus S} \delta(S, \{v\}) \ge \sum_{v \in H_h \setminus S} \delta(S, \{v\}) \ge |H_h \setminus S| \cdot \delta(S, \{i\}),
\]
which implies:
\[
\delta(S, \{i\}) \le \frac{\delta(S)}{h - k}.
\]
Putting everything together:
\[
\delta((S - j) + i) \ge \delta(S) - \frac{2}{h - k} \cdot \delta(S) = \left(1 - \frac{2}{h - k}\right) \cdot \delta(S),
\]
completing the proof.
\end{proof}

\subsection{Approximate Kernel for Constrained Max-Cut}
\label{subsec:kernel_constrained_max_cut}

\paragraph{Kernel Procedure for Constrained Max-Cut.}
\label{sec:kernal_procedure_double}
We now describe the kernelization procedure for the $\cmaxcut$ problem with multiple vertex groups.

\medskip
\noindent
\textbf{Input:} Graph $G=(V =\biguplus_{i \in [c]} V_i, E)$, cardinality constraints $k_1, \dots, k_c$, and approximation parameter $0 < \varepsilon\leq 1/2$. \\
\textbf{Output:} Reduced graph $\widetilde{G} = (\widetilde{V} = \biguplus_{i \in [c]} \widetilde{V}_i, \widetilde{E})$.

\begin{enumerate}
    \item For each $i \in [c]$, if $k_i / \varepsilon + 1 \le n_i := |V_i|$, retain the top $k_i / \varepsilon$ vertices in $V_i$ by weighted degree and merge the remaining vertices into a super vertex $s_i$.
    \item Return the resulting graph $\widetilde{G}$.
\end{enumerate}

Note that a super vertex $s_i$ appears in the output graph $\widetilde{G}$ only if $k_i / \varepsilon + 1 \le n_i$. Let $S_{\text{super}} := \{s_i \mid s_i \text{ exists in } \widetilde{G}\}$ denote the set of all super vertices.

\begin{theorem}
\label{thm:maxcut_kernel_red_double}
For any $\cmaxcut$ instance $(G, k_1, \dots, k_c)$, let $(\widetilde{G}, k_1, \dots, k_c)$ be the reduced instance returned by the \hyperref[sec:kernal_procedure_double]{$\cmaxcut$ kernel} procedure. Then the optimal value of the reduced instance, conditioned on not selecting any super vertex, satisfies
\[
\widetilde{\opt} := \max_{\substack{S \subseteq \widetilde{V} \setminus S_{\text{super}} \\ |S \cap \widetilde{V}_i| = k_i\ \forall i \in [c]}} \delta(S) \ge (1 - 4c\varepsilon) \cdot \opt.
\]
\end{theorem}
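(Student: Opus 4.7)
\textbf{Proof proposal for \Cref{thm:maxcut_kernel_red_double}.}
My plan is to mimic the single-group argument in \Cref{thm:maxcut_kernel_red_single}, but perform the vertex-exchange process separately within each group $V_i$. Let $h_i := k_i/\varepsilon$ and, for the groups where truncation actually occurs (i.e.\ $h_i+1 \le n_i$), recall that $\widetilde V_i = H^{(i)}_{h_i} \cup \{s_i\}$ by construction. Starting from an optimal feasible solution $S^* = \biguplus_i S^*_i$ with $\delta(S^*) = \opt$, I will iteratively transform $S^*$ into a feasible set $S_T$ whose intersection with each group lies entirely in $H^{(i)}_{h_i}$ (hence avoids all super vertices), losing only a $(1-4c\varepsilon)$ multiplicative factor in cut value.

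The main technical step is a group-restricted analogue of \Cref{lem:local_exchange}: for any feasible $S$ with $|S \cap V_i| = k_i$ and $S \cap V_i \not\subseteq H^{(i)}_{h_i}$, there exist $u \in H^{(i)}_{h_i} \setminus S$ and $v \in (S \cap V_i) \setminus H^{(i)}_{h_i}$ such that $\delta((S-v)+u) \ge (1 - 2/(h_i - k_i)) \cdot \delta(S)$. The proof copies the argument in \Cref{lem:local_exchange} verbatim: pick $u \in H^{(i)}_{h_i} \setminus S$ minimizing $\delta(S,\{u\})$ and any $v \in (S \cap V_i)\setminus H^{(i)}_{h_i}$, and apply submodularity of the cut function to get $\delta((S-v)+u) - \delta(S) \ge -2\delta(S,\{u\})$. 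The only place that needs re-examination is the bound on $\delta(S,\{u\})$: since $|S \cap H^{(i)}_{h_i}| \le k_i$, we have $|H^{(i)}_{h_i} \setminus S| \ge h_i - k_i$, and since $\delta(S) = \sum_{v \in V \setminus S} \delta(S,\{v\}) \ge \sum_{v \in H^{(i)}_{h_i} \setminus S} \delta(S,\{v\}) \ge (h_i - k_i)\cdot \delta(S,\{u\})$, the desired bound $\delta(S,\{u\}) \le \delta(S)/(h_i-k_i)$ follows. Note that the swap preserves feasibility (both $u,v \in V_i$) and strictly increases $|S \cap H^{(i)}_{h_i}|$.

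I will then apply the lemma iteratively, processing one group at a time. For each group $i$ with $h_i+1 \le n_i$, at most $T_i \le k_i$ swaps suffice to make $S \cap V_i \subseteq H^{(i)}_{h_i}$; groups where no truncation occurred need no swaps. After all swaps, the resulting set $S_T$ is feasible, avoids every super vertex $s_i$, and satisfies
\[
\delta(S_T) \;\ge\; \opt \cdot \prod_{i=1}^{c} \left(1 - \frac{2}{h_i - k_i}\right)^{T_i}
\;\ge\; \opt \cdot \prod_{i=1}^{c} \left(1 - \frac{2 T_i}{h_i - k_i}\right)
\;\ge\; \opt \cdot \left(1 - \sum_{i=1}^c \frac{2k_i}{h_i - k_i}\right),
\]
using $(1-x)^T \ge 1 - Tx$ and then $\prod (1-a_i) \ge 1 - \sum a_i$ for $a_i \in [0,1]$. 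Substituting $h_i - k_i = k_i(1-\varepsilon)/\varepsilon$ yields $\sum_i 2k_i/(h_i-k_i) = 2c\varepsilon/(1-\varepsilon) \le 4c\varepsilon$ for $\varepsilon \le 1/2$, giving $\delta(S_T) \ge (1-4c\varepsilon)\cdot \opt$, as required.

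I do not anticipate a major obstacle. The only subtlety worth double-checking is that the choice of $u$ and $v$ is confined to $V_i$ so that the swap preserves each cardinality constraint, which is handled above by minimizing $\delta(S,\{u\})$ only over $H^{(i)}_{h_i}\setminus S$ (rather than all of $V \setminus S$) and bounding this minimum using the contribution of $H^{(i)}_{h_i}\setminus S$ to $\delta(S)$. The slight weakening from $1-4\varepsilon$ to $1-4c\varepsilon$ reflects precisely that up to $k$ swaps may be needed in total, with the penalty per swap in group $i$ scaled by $1/k_i$.
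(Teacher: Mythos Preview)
Your proposal is correct and follows essentially the same approach as the paper: start from an optimal $S^*$, then iteratively swap, within each group $V_i$, a vertex outside $H^{(i)}_{h_i}$ for one inside it using the submodularity-based local exchange, losing a $(1-2/(h_i-k_i))$ factor per swap and accumulating the $1-4c\varepsilon$ bound. The only cosmetic difference is that the paper packages the group-restricted exchange as a corollary of a slightly more general lemma (with $H = (S\setminus V_p)\cup H^{(p)}_h$), whereas you state and prove it directly; the underlying argument is identical.
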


\begin{proof}
Let $S^*$ be an optimal solution to the original instance with $\delta(S^*) = \opt$. For each part $i \in [c]$, define $H_i := H_{k_i/\varepsilon}^{(i)}$ as the top $k_i / \varepsilon$ vertices in $V_i$ by weighted degree (as defined in the preliminaries).

We will transform $S^*$ into a solution $S_T$ such that $S_T \cap V_i \subseteq H_i$ for every $i \in [c]$ while losing only a small fraction of the cut value. At each step $t$, identify the smallest index $p \in [c]$ for which $S_t \cap V_p \nsubseteq H_p$, and apply the local exchange from \Cref{cor:local_exchange_gen} to swap a vertex $j \in (S_t\cap V_p) \setminus H_p$ with a vertex $i \in H_p \setminus (S_t\cap V_p)$, yielding a new set $S_{t+1}$ with
\[
\delta(S_{t+1}) \ge \left(1 - \frac{2}{k_i/\varepsilon - k_i}\right) \cdot \delta(S_t).
\]
For each $i \in [c]$, we perform at most $k_i$ such exchanges in $V_i$. Hence, the total cut value at the end satisfies:
\[
\delta(S_T) \ge \delta(S^*) \cdot \prod_{i=1}^c \left(1 - \frac{2}{k_i/\varepsilon - k_i}\right)^{k_i}.
\]
Using the inequality $(1 - x)^m \ge 1 - mx$, we get
\[
\delta(S_T) \ge \opt \cdot \left(1 - \sum_{i=1}^c \frac{2k_i}{k_i/\varepsilon - k_i}\right) \ge \opt \cdot \left(1 - \sum_{i=1}^c \frac{2\varepsilon}{1 - \varepsilon}\right) \ge \opt \cdot (1 - 4c\varepsilon),
\]
where the final bound assumes $\varepsilon \le 1/2$.
\end{proof}

\begin{lemma}
\label{lem:local_exchange_gen}
Let $S \subseteq V$ be a subset of size $k$ and let $H \subseteq V$ be a subset of size greater than $k$ such that $S \setminus H \ne \emptyset$ and every vertex in $H \setminus S$ has higher weighted degree than every vertex in $S \setminus H$. Then there exist $i \in H \setminus S$ and $j \in S \setminus H$ such that:
\[
\delta((S - j) + i) \ge \left(1 - \frac{2}{|H\backslash S|}\right) \cdot \delta(S).
\]
\end{lemma}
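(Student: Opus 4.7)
The plan is to imitate the proof of \Cref{lem:local_exchange} almost verbatim, observing that it only used two properties of $H_h$: (a) the top-degree property, namely that every vertex of $H_h\setminus S$ has higher weighted degree than every vertex of $S\setminus H_h$, and (b) a lower bound on $|H_h\setminus S|$ (which was $h-k$ in the original setting). Both of these are available here: (a) is part of the hypothesis, and (b) will appear naturally in the denominator as $|H\setminus S|$.

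Concretely, I would choose $i\in H\setminus S$ to be the vertex minimizing $\delta(S,\{i\})$, and take $j$ to be an arbitrary vertex of $S\setminus H$ (nonempty by hypothesis). The proof then has two steps. First, apply submodularity of the cut function:
\[
\delta((S-j)+i) + \delta(S) \;\ge\; \delta(S+i) + \delta(S-j),
\]
and expand each of the two differences on the right using $\delta(S+i)-\delta(S) = \delta(\{i\}) - 2\delta(S,\{i\})$ and $\delta(S-j)-\delta(S) = -\delta(\{j\}) + 2\delta(S-j,\{j\})$. This gives
\[
\delta((S-j)+i) - \delta(S) \;\ge\; \bigl(\delta(\{i\})-\delta(\{j\})\bigr) + 2\delta(S-j,\{j\}) - 2\delta(S,\{i\}).
\]
The first term on the right is nonnegative by the degree-ordering hypothesis on $H$, and the second is nonnegative as well, so the whole right-hand side is at least $-2\delta(S,\{i\})$.

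The second step is the averaging bound on $\delta(S,\{i\})$. Since $i$ was chosen to minimize $\delta(S,\cdot)$ over $H\setminus S$, and since
\[
\delta(S) \;=\; \sum_{v\in V\setminus S} \delta(S,\{v\}) \;\ge\; \sum_{v\in H\setminus S} \delta(S,\{v\}) \;\ge\; |H\setminus S|\cdot \delta(S,\{i\}),
\]
we conclude $\delta(S,\{i\}) \le \delta(S)/|H\setminus S|$. Plugging this into the previous inequality yields $\delta((S-j)+i) \ge \bigl(1 - \tfrac{2}{|H\setminus S|}\bigr)\delta(S)$, as required.

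I do not anticipate any serious obstacle: the only place where the original proof made more specific use of $H=H_h$ was in asserting $\delta(\{i\})\ge\delta(\{j\})$ (which there followed from the definition of $H_h$ as the top-$h$ set), and in bounding $|H_h\setminus S|\ge h-k$. Here, (a) is supplied by the hypothesis on $H$, and (b) is simply replaced by $|H\setminus S|$ in the denominator; the only sanity check is that $|H\setminus S|>0$, which follows from $|H|>k=|S|$.
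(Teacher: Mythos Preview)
Your proposal is correct and is essentially the same argument the paper uses: choose $i\in H\setminus S$ minimizing $\delta(S,\{i\})$, apply submodularity of the cut function together with the degree-ordering hypothesis to get $\delta((S-j)+i)-\delta(S)\ge -2\delta(S,\{i\})$, and then bound $\delta(S,\{i\})$ by averaging over $H\setminus S$. The paper's own proof simply says it is identical to that of \Cref{lem:local_exchange}, which is exactly what you have carried out.
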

\begin{proof}
The proof is identical to that of \Cref{lem:local_exchange}. It proceeds by selecting $i$ to minimize $\delta(S, \{i\})$ over $H \setminus S$ and applying cut submodularity to bound the loss when replacing $j \in S \setminus H$.
\end{proof}
\begin{cor}
\label{cor:local_exchange_gen}
For any subset $S\subseteq V$ and an index $p\in [c]$ such that $|S\cap V_p|=k_p\leq h$ and $(S\cap V_p)\backslash H_{h}^{(p)}\neq \emptyset$, there exist vertices $i\in H_{h}^{(p)}\backslash (S\cap V_p)$ and $j\in (S\cap V_p)\backslash H_{h}^{(p)}$ such that 
\[
\delta((S - j) + i) \ge \left(1 - \frac{2}{h - k_p}\right) \cdot \delta(S).
\]
\end{cor}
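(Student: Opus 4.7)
The plan is to obtain \Cref{cor:local_exchange_gen} as a direct specialization of \Cref{lem:local_exchange_gen}. The lemma is stated for arbitrary sets $S, H \sse V$ and produces a single-vertex swap; the corollary only wants that swap to occur inside the part $V_p$. So the only real task is to choose $H$ cleverly so that both the exchange pool $H \setminus S$ and the candidate victims $S \setminus H$ sit inside $V_p$.

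\textbf{Choice of $H$ and verification of hypotheses.} I would set $H := (S \setminus V_p) \cup H_h^{(p)}$. Under this choice, $H \setminus S = H_h^{(p)} \setminus (S \cap V_p) \sse V_p$ and $S \setminus H = (S \cap V_p) \setminus H_h^{(p)} \sse V_p$, so any swap pair produced by \Cref{lem:local_exchange_gen} automatically stays inside $V_p$. The three hypotheses of the lemma are then checked as follows. First, $|H| - |S| = |H_h^{(p)}| - |S \cap V_p| = h - k_p$, which is positive in the nontrivial regime $h > k_p$ (if $h = k_p$ the inequality in the corollary is vacuous). Second, the assumption $(S \cap V_p) \setminus H_h^{(p)} \neq \emptyset$ is precisely $S \setminus H \neq \emptyset$. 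Third, by the definition of $H_h^{(p)}$ as the top-$h$ weighted-degree vertices of $V_p$ (with ties broken consistently), every vertex of $H \setminus S \sse H_h^{(p)}$ has weighted degree at least as large as every vertex of $S \setminus H \sse V_p \setminus H_h^{(p)}$.

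\textbf{Extracting the stated bound.} Applying \Cref{lem:local_exchange_gen} then yields $i \in H_h^{(p)} \setminus (S \cap V_p)$ and $j \in (S \cap V_p) \setminus H_h^{(p)}$ with $\delta((S-j)+i) \geq \bigl(1 - 2/|H \setminus S|\bigr)\,\delta(S)$. Since $H_h^{(p)} \sse V_p$, we have $H_h^{(p)} \cap S = H_h^{(p)} \cap (S \cap V_p)$, so $|H_h^{(p)} \cap S| \leq |S \cap V_p| = k_p$, and hence $|H \setminus S| = h - |H_h^{(p)} \cap S| \geq h - k_p$. Thus $2/|H \setminus S| \leq 2/(h - k_p)$, and the bound from the lemma weakens to $\bigl(1 - 2/(h - k_p)\bigr)\,\delta(S)$, which is exactly the conclusion of the corollary.

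\textbf{Anticipated obstacle.} There is essentially no obstacle: the whole argument is a one-line invocation of the more general lemma with a well-chosen $H$. The only mild care needed is distinguishing the degenerate case $h = k_p$, and remembering that $|H_h^{(p)} \cap S| \le k_p$ holds because $H_h^{(p)} \sse V_p$ (otherwise one might only bound it by $|S|$, which would weaken the ratio).
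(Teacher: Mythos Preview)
Your proposal is correct and takes exactly the same approach as the paper: the paper's proof is the one-liner ``use $H=(S\setminus V_p)\cup H_h^{(p)}$ in \Cref{lem:local_exchange_gen} and note $H\setminus S = H_h^{(p)}\setminus(S\cap V_p)$,'' which is precisely your choice of $H$. Your write-up simply fills in the hypothesis checks and the $|H\setminus S|\ge h-k_p$ bound that the paper leaves implicit.
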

\begin{proof}
    Using $H=(S\backslash V_p)\cup H_h^{(p)}$ in \Cref{lem:local_exchange_gen}, and the fact that $H\backslash S= H_{h}^{(p)}\backslash (S\cap V_p)$ finishes the proof. 
\end{proof}

\section{Single Constraint}

In this section, we describe our $(\alpha_{cc} - \varepsilon)$-approximation algorithm for $\maxcut_k$, for all values of $k$. Without loss of generality, we assume $k \leq n/2$ due to the symmetry of the cut function.

\subsection{Algorithm}
\label{subsec:Alg_single}

\textbf{Input:} Weighted graph $G = (V, E)$ and parameters $k \leq n/2$, $0 < \varepsilon \leq 1/2$. \\
\textbf{Output:} A set $S \subseteq V$ of size $|S| = k$.

\begin{enumerate}[leftmargin=*]
    \item \textbf{(Preprocessing Step)}  
    Let $\widetilde{G} = (\widetilde{V}, \widetilde{E})$ be the approximate kernel output by the \hyperref[subsec:kernel_max_cut_k]{$\maxcut_k$~kernel} with input $(G, k, \varepsilon)$. Note that $|\widetilde{V}| = O(k/\varepsilon)$.

    \item \textbf{(SDP and Conditioning)}  
    \begin{enumerate}
        \item Solve a $(3 + 4/\varepsilon^{120})$-round Lasserre SDP relaxation for the $\maxcut_k$ problem on the graph $\widetilde{G}$ (see \Cref{subsec:lassere_sdp_single}).
        \item Apply \Cref{thm:valuepreserving_independent_single} with $\alpha = \varepsilon^{60}$ and $\ell = 2$ to obtain a 2-level SDP solution that is $\varepsilon^{60}$-independent and has objective value at least $\widetilde{\opt} - \varepsilon^{60}$, where $\widetilde{\opt}$ is the optimum value of $\maxcut_k$ on $\widetilde{G}$ (conditioned on not selecting the super vertex $s$). By \Cref{lem:lowerbound_opt_single}~(\ref{lem:structural_2_single}), we know that $\widetilde{\opt} - \varepsilon^{60} \ge (1 - \varepsilon) \widetilde{\opt}$.
    \end{enumerate}

    \item \textbf{(Rounding)}  
    Apply the rounding algorithm of Raghavendra and Tan (Algorithm 5.3 in~\cite{raghavendra2012approximating}) to obtain a (random) set $\widehat{S}$. Let $\mathcal{E}$ denote the event that $|\widehat{S}| \in [k - \varepsilon^2 |\widetilde{V}|,\, k + \varepsilon^2 |\widetilde{V}|]$.

    \item \textbf{(Correction)}  
    If event $\mathcal{E}$ does not occur, return an arbitrary subset $S \subseteq \widetilde{V} \setminus \{s\}$ of size $k$. Otherwise, adjust $\widehat{S}$ by randomly adding or removing vertices to produce a set $S$ of size exactly $k$, and return $S$.
\end{enumerate}

\subsection{Lasserre SDP}
\label{subsec:lassere_sdp_single}
We now describe the SDP used in Step 2 of the \hyperref[subsec:Alg_single]{algorithm} above. Since a full overview of the Lasserre hierarchy is already provided in \Cref{sec:Lasserre-basics}, we only describe the relevant formulation.

After the preprocessing step, we solve the following level-$\ell$ Lasserre SDP relaxation for $\maxcut_k$ on the reduced graph $\widetilde{G}$, with an additional constraint ensuring the super vertex $s$ is not selected:

\begin{align}
 \label{SDP:maxcut_k_conditioned}   
\max\quad & \sumL_{\{i,j\} \in E} w_{i,j} \cdot \PP_{\mu_{\{i,j\}}}[X_{\{i,j\}} \in \{(-1,1), (1,-1)\}] \\
\text{s.t.}\quad 
& \sumL_{i \in \widetilde{V}} \PP_{\mu_{S \cup \{i\}}}(X_i = 1 \mid X_S = \alpha) = k && \forall S \subseteq V,\ |S| \le \ell - 1,\ \alpha \in \{0,1\}^S \nonumber \\
& \PP_{\mu_{S \cup \{s\}}}(X_s = 1 \mid X_S = \alpha) = 0 && \forall |S| \le \ell - 1,\ \alpha \in \{0,1\}^S \\
& \mu \textnormal{ is a level-} \ell \textnormal{ pseudo-distribution}. \nonumber
\end{align}

\subsection{Analysis}

\begin{proof}[Proof of \Cref{thm:Max-Cut-k}]
Using \Cref{lem:lowerbound_opt_single}~(\ref{lem:structural_1_single}), the optimal value on the kernelized instance $\widetilde{G}$ is at least $(1 - 4\varepsilon)\opt$. After solving the SDP, we obtain a value at least $(1 - \varepsilon)\widetilde{\opt} \ge (1 - 5\varepsilon)\opt$.

The expected size of $\widehat{S}$ is exactly $k$ since Algorithm 5.3 from \cite{raghavendra2012approximating} preserves the bias of each vertex. Using \Cref{thm:balance_variance}, the variance of the balance $|\widehat{S}|/|\widetilde{V}|$ is at most $O(\varepsilon^{60/12})=O(\varepsilon^5)$ (Assume that the constant hidden in the $O$-notation is $1$ for simplicity. One can absorb the constant into the $\varepsilon$ in general). By Chebyshev’s inequality, the event $\mathcal{E}$ occurs with probability at least $1 - \varepsilon$.

The expected cut value conditioned on $\mathcal{E}$ satisfies:
\begin{align}
\label{eqn:conditional_S_hat}
\EE[\delta(\widehat{S}) \mid \mathcal{E}] 
\geq \frac{\EE[\delta(\widehat{S})] - \varepsilon \cdot \opt}{1 - \varepsilon}
\geq \left( \frac{(1 - 5\varepsilon)\alpha_{cc} - \varepsilon}{1 - \varepsilon} \right)\cdot \opt
\geq (\alpha_{cc} - 9\varepsilon)\cdot \opt.
\end{align}

Let $S$ be the final corrected set. Then:
\begin{align}
\EE[\delta(S)] 
&\geq (1 - \varepsilon) \cdot \EE[\delta(S) \mid \mathcal{E}] \\
&\stackrel{(\cref{lem:randomcorrection_single})}{\geq} (1 - \varepsilon)^2 \cdot \EE[\delta(\widehat{S}) \mid \mathcal{E}] \\
&\stackrel{(\ref{eqn:conditional_S_hat})}{\geq} (1 - \varepsilon)^2 \cdot (\alpha_{cc} - 9\varepsilon) \cdot \opt.
\end{align}

Steps 1 and 4 take $O(|E|\log|E|)$ and $O(|V|)$ time, respectively. Steps 2 and 3, which involve solving the Lasserre SDP and rounding, dominate the runtime and require $O(|\widetilde{V}|)^{\poly(1/\varepsilon)}$ time.
\end{proof}

\begin{lemma}
\label{lem:randomcorrection_single}
Conditioned on the event $\mathcal{E}$ and a fixed $\widehat{S}$, let $S$ be the set obtained by randomly adding or deleting $|k - |\widehat{S}||$ vertices from $\widetilde{V} \setminus (\widehat{S} \cup \{s\})$ so that $|S| = k$. Then:
\[
\EE[\delta(S)] \geq (1 - \varepsilon) \cdot \delta(\widehat{S}).
\]
\end{lemma}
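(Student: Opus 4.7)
The plan is to do an edge-by-edge analysis of the expected cut-value change, split into two cases depending on the sign of $t := k - |\widehat{S}|$. Under the event $\mathcal{E}$ we have $|t| \leq \varepsilon^2 |\widetilde{V}|$, and since the kernelization guarantees $|\widetilde{V}| \leq k/\varepsilon + 1$, this yields $|t| \leq \varepsilon k + \varepsilon^2$; combined with $|\widehat{S}| \geq k - \varepsilon^2 |\widetilde{V}|$ this gives $|\widehat{S}| = \Theta(k)$ and $|t|/|\widehat{S}| = O(\varepsilon)$. Since $\widehat{S}$ is fixed, the only randomness comes from the uniformly random set of vertices we add or delete.

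\textbf{Deletion case ($t < 0$).} Let $R \subseteq \widehat{S}$ be a uniformly random subset of size $|t|$, and set $S = \widehat{S} \setminus R$. I classify each edge $\{u,v\}$ by its endpoints. Each original cut edge of $\widehat{S}$ (with $u \in \widehat{S}$, $v \notin \widehat{S}$) survives as a cut edge of $S$ iff $u \notin R$, an event of probability $1 - |t|/|\widehat{S}|$. Each edge internal to $\widehat{S}$ becomes a cut edge of $S$ iff exactly one endpoint is in $R$, which contributes a nonnegative amount. Dropping the internal-edge term, I obtain
\[
\EE[\delta(S)] \;\geq\; \left(1 - \frac{|t|}{|\widehat{S}|}\right)\delta(\widehat{S}).
\]

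\textbf{Addition case ($t > 0$).} Let $A$ be a uniformly random subset of $W := \widetilde{V} \setminus (\widehat{S} \cup \{s\})$ of size $t$, and set $S = \widehat{S} \cup A$. An analogous classification gives: each cut edge from $\widehat{S}$ to $W$ is lost iff its $W$-endpoint lands in $A$, an event of probability $t/|W|$; every edge from $\widehat{S}$ to the super vertex $s$ remains in the cut (since $s \notin S$); and new cut edges may be created from $A$ to $W \setminus A$ or from $A$ to $\{s\}$. Dropping the nonnegative gain terms,
\[
\EE[\delta(S)] \;\geq\; \left(1 - \frac{t}{|W|}\right)\delta(\widehat{S}).
\]
Here $|W| \geq |\widetilde{V}| - |\widehat{S}| - 1 = \Omega(k/\varepsilon)$, so $t/|W| = O(\varepsilon^2)$ in this case — an even stronger bound than the deletion case.

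The only mild obstacle is that the raw edge-by-edge bounds deliver $(1 - O(\varepsilon))\delta(\widehat{S})$ rather than exactly $(1 - \varepsilon)\delta(\widehat{S})$; the constant is absorbed by rescaling $\varepsilon$ at the top of the algorithm (the outer proof of \Cref{thm:Max-Cut-k} already pays multiple factors of $(1-\varepsilon)$ and $(\alpha_{cc} - O(\varepsilon))$, so any constant overhead in $\varepsilon$ is harmless). Crucially, the bound is purely multiplicative in $\delta(\widehat{S})$, so the argument remains valid even when $\opt$ (and hence $\delta(\widehat{S})$) is much smaller than the trivial upper bound of $1$.
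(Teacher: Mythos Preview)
Your proof is correct and takes a more elementary route than the paper. The paper invokes a general submodularity lemma (Lemma~2.2 of Buchbinder et al.): for any non-negative submodular $f$ and random $R$ in which each element appears with probability at most $p$, one has $\EE[f(S\cup R)]\ge(1-p)f(S)$. It then verifies that the per-element add/remove probability is at most~$\varepsilon$ and applies the lemma (for deletion, to the complement). Your argument replaces this black box with a direct edge-by-edge accounting specific to the cut function, which is self-contained and arguably more transparent; the paper's approach is cleaner to state and would extend verbatim to any non-negative submodular objective.

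Two small points worth tightening. First, your claim $|W|=\Omega(k/\varepsilon)$ need not hold when there is no super vertex (then $|\widetilde V|=n$ could be as small as $2k$), but the conclusion survives: in that case $t\le\varepsilon^2 n$ and $|W|\ge n-k\ge n/2$, so $t/|W|\le 2\varepsilon^2$ still. Second, you do not actually need to settle for $(1-O(\varepsilon))$: a slightly more careful calculation shows $|t|/|\widehat S|\le\varepsilon$ in the deletion case (equivalent to $|\widehat S|\le k/(1-\varepsilon)$, which follows from $|\widehat S|\le k+\varepsilon^2|\widetilde V|$ and $|\widetilde V|\le k/\varepsilon+1$) and $t/|W|\le\varepsilon$ in the addition case, so your argument already yields the exact $(1-\varepsilon)$ factor stated in the lemma.
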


\begin{lemma}
\label{lem:lowerbound_opt_single}
Let $\widetilde{\opt}$ denote the optimum value of $\maxcut_k$ on $\widetilde{G}$, conditioned on not selecting the super vertex $s$.
\begin{enumerate}
    \item \label{lem:structural_1_single} $\widetilde{\opt} \geq (1 - 4\varepsilon) \cdot \opt$, where $\opt$ is the optimum value for $\maxcut_k$ on $G$.
    \item \label{lem:structural_2_single} $\widetilde{\opt}$ is at least an $\varepsilon$-fraction of the total edge weight in $\widetilde{E}$.
\end{enumerate}
\end{lemma}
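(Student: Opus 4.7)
Part (\ref{lem:structural_1_single}) is essentially a restatement of \Cref{thm:maxcut_kernel_red_single}, which was already established above. The kernel retains the top $k/\varepsilon$ vertices by weighted degree, and the theorem's exchange argument shows that any optimal cut $S^*$ in $G$ can be transformed (via a sequence of single-vertex swaps using \Cref{lem:local_exchange}) into a feasible solution that avoids the super vertex $s$ while losing at most a $(1-4\varepsilon)$ factor. No further work is needed for this part.

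For part (\ref{lem:structural_2_single}), the plan is a simple averaging argument: draw $S$ uniformly at random from all $k$-subsets of $\widetilde{V}\setminus\{s\}$ and bound $\EE[\delta(S)]$ edge-by-edge. Let $\tilde{n}:=|\widetilde{V}\setminus\{s\}|$. By inspection of the kernel procedure, either $\tilde{n}=k/\varepsilon$ (when the reduction triggered) or $\tilde{n}=n$ and $k/\varepsilon\ge n$ (when it did not). In the first case $k/\tilde{n}=\varepsilon$, and in the second $k/\tilde{n}=k/n\ge \varepsilon$. Combined with the standing assumption $k\le n/2$, one gets $\varepsilon\le k/\tilde{n}\le 1/2$ in both regimes.

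Now edge-by-edge: for any edge $\{u,v\}\subseteq \widetilde{V}\setminus\{s\}$, the cut probability is $2(k/\tilde{n})(\tilde{n}-k)/(\tilde{n}-1)\ge 2(k/\tilde{n})(1-k/\tilde{n})\ge k/\tilde{n}\ge \varepsilon$, where the second inequality uses $k/\tilde{n}\le 1/2$. For an edge $\{u,s\}$ incident to the super vertex, the cut probability equals $\Pr[u\in S]=k/\tilde{n}\ge \varepsilon$. Summing over $\widetilde{E}$ by linearity of expectation gives $\EE[\delta(S)]\ge \varepsilon\cdot \sum_{e\in\widetilde{E}} w(e)$, so some feasible $S$ achieves this bound and hence $\widetilde{\opt}\ge \varepsilon\cdot W(\widetilde{E})$.

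I do not expect any serious obstacle here. The one point requiring care is the two-regime case analysis to verify $\varepsilon\le k/\tilde{n}\le 1/2$, which hinges on $k\le n/2$ together with the kernel's choice of retaining exactly $k/\varepsilon$ vertices. Once that is in hand, the rest is one line of elementary probability.
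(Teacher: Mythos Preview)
Your proposal is correct and follows essentially the same approach as the paper: Part~(\ref{lem:structural_1_single}) is deferred to \Cref{thm:maxcut_kernel_red_single}, and Part~(\ref{lem:structural_2_single}) is proved by the same uniform-random $k$-subset averaging argument, computing the cut probability edge-by-edge (distinguishing edges incident to $s$ from the rest) and using $2k\le \tilde n\le k/\varepsilon$ to get the $\varepsilon$ lower bound. Your two-regime case analysis to establish $\varepsilon\le k/\tilde n\le 1/2$ is slightly more explicit than the paper's, but the content is identical.
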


\begin{proof}
Part~(\ref{lem:structural_1_single}) follows from \Cref{thm:maxcut_kernel_red_single}. For part~(\ref{lem:structural_2_single}), we show that a uniformly random subset of $V \setminus \{s\}$ of size $k$ cuts any edge with probability at least $\varepsilon$.

Let $n' := |\widetilde{V} \setminus \{s\}|$. Then $2k \le n' \le k/\varepsilon$. If edge $e$ is adjacent to $s$, it is cut with probability $k/n' \ge \varepsilon$. Otherwise, the cut probability is $2k(n' - k)/(n'(n' - 1)) \ge k/n' \ge \varepsilon$.
\end{proof}

\section{Constant Number of Constraints}

In this section, we present our $(\alpha_{cc} - \varepsilon)$-approximation algorithm for $\cmaxcut$, the Max-Cut problem with $c$ cardinality constraints. Our primary focus is on instances where the number of vertices to be selected from each part $V_i$ is relatively small, and for this reason, we assume that $k_i \leq n_i/2$. (Unlike the case in $\maxcut_k$, this is not without loss of generality.)

\smallskip
The key observation enabling this extension of~\cite{raghavendra2012approximating} to multiple constraints is that the notion of $\alpha$-independence can be defined locally within each block. Specifically, it suffices to ensure that the average mutual information between vertex pairs within each part is small:
\[
\mathbb{E}_{i, i' \in V_j}[I(X_i; X_{i'})] \leq \alpha \quad \text{for all } j \in [c].
\]
If this condition holds, then after rounding via Algorithm 5.3 of~\cite{raghavendra2012approximating}, the size of each intersection $|\widehat{S} \cap V_j|$ concentrates around its expectation. In particular, using \Cref{thm:balance_variance} with $W$ as the uniform distribution over $V_j$, the variance of $|\widehat{S} \cap V_j| / |V_j|$ is bounded by $O(\alpha^{1/12})$ for every $j \in [c]$. Therefore, for an appropriate choice of $\alpha$, we obtain
\[
|\widehat{S} \cap V_j| \in [k_j(1 - \varepsilon),\; k_j(1 + \varepsilon)]
\quad \text{simultaneously for all } j \in [c],
\]
with probability at least $1 - \varepsilon$.

\medskip

\begin{defn}[$\alpha$-block independence]
An SDP solution to an $\ell$-round Lasserre relaxation is \emph{$\alpha$-block independent} if
%\[
$\mathbb{E}_{i, i' \in V_j}[I_{\mu_{\{i,i'\}}}(X_i; X_{i'})] \leq \alpha$ %\quad \text{for all } j \in [c].
%\]
hold for all $j \in [c]$.
\end{defn}

To find such solutions, we extend the conditioning technique of~\cite{raghavendra2012approximating}. The following procedure begins with an $(L + \ell)$-round SDP solution and returns an $\ell$-round $\alpha$-block independent solution for $L = O(c^2 / \alpha^2)$.

\paragraph{Conditioning Procedure:}
\begin{enumerate}
    \item For each $t \in [L]$:
    \begin{enumerate}
        \item Sample a block index $j_t \in [c]$ uniformly at random.
        \item Sample a vertex $i_t \in V_{j_t}$ uniformly at random.
        \item Sample $X_{i_t}$ from its marginal distribution under the current SDP solution (conditioned on previous outcomes), and condition on this value.
        \item If the resulting SDP solution is $\alpha$-block independent, terminate and return it.
    \end{enumerate}
\end{enumerate}

\begin{lemma}
\label{lem:sampling_correlation_reduction_block}
For any $L \in \mathbb{Z}_{\geq 2}$, there exists $t \leq L$ such that
\begin{align*}
\mathbb{E}_{j_1, \dots, j_t \in [c]} \mathbb{E}_{i_1 \in V_{j_1}, \dots, i_t \in V_{j_t}} \left[
\sum_{j, j' \in [c]} \mathbb{E}_{i \in V_j,\; i' \in V_{j'}} 
I(X_i; X_{i'} \mid X_{i_1}, \dots, X_{i_t})
\right] \leq \frac{c^2}{L}.
\end{align*}
\end{lemma}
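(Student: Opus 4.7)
The plan is to adapt the potential-function argument of \Cref{lem:sampling_correlation_reduction} (Raghavendra--Tan) to the block setting, using a block-aware entropy potential. Define
\[
\Phi_t := \mathbb{E}_{j_1, \dots, j_t} \mathbb{E}_{i_1 \in V_{j_1}, \dots, i_t \in V_{j_t}} \left[\sum_{j \in [c]} \mathbb{E}_{i \in V_j} H\bigl(X_i \mid X_{i_1}, \dots, X_{i_t}\bigr)\right],
\]
where each $j_s$ is uniform in $[c]$ and each $i_s$ is uniform in $V_{j_s}$, with entropies taken with respect to the pseudo-distribution $\mu$. Since every pseudo-variable $X_i$ is $\{-1,1\}$-valued, each conditional entropy is at most $1$, so $\Phi_0 \le c$ and $\Phi_L \ge 0$.

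The key step is to show that the one-step drop $\Phi_t - \Phi_{t+1}$ equals a clean multiple of the target quantity. Let $Q_t$ denote the left-hand side of the inequality in the lemma at level $t$. By the chain rule for conditional mutual information applied pointwise on every realization of the history,
\[
H\bigl(X_i \mid X_{i_1}, \dots, X_{i_t}\bigr) - H\bigl(X_i \mid X_{i_1}, \dots, X_{i_{t+1}}\bigr) = I\bigl(X_i; X_{i_{t+1}} \mid X_{i_1}, \dots, X_{i_t}\bigr).
\]
The fresh sample $(j_{t+1}, i_{t+1})$ is drawn by picking $j_{t+1}$ uniformly from $[c]$ and then $i_{t+1}$ uniformly from $V_{j_{t+1}}$, so averaging turns the right-hand side into $\tfrac{1}{c}\sum_{j' \in [c]} \mathbb{E}_{i' \in V_{j'}} I(X_i; X_{i'} \mid \cdot)$. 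Summing over the outer indices $j \in [c]$ and $i \in V_j$ yields $\Phi_t - \Phi_{t+1} = \tfrac{1}{c}\, Q_t$. Telescoping then gives $\sum_{t=0}^{L-1} Q_t = c\,(\Phi_0 - \Phi_L) \le c^2$, so $\min_{t \in \{0, \dots, L-1\}} Q_t \le c^2/L$, establishing the existence of a $t \le L$ with the desired bound.

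I do not anticipate a substantive obstacle; the proof is essentially a mechanical lifting of the single-block potential argument. The only care needed is in bookkeeping the nested expectations so that conditioning on the history commutes correctly with taking fresh samples and with the pointwise chain rule. The factor of $c^2$ (in place of the $1/(k-1)$ in \Cref{lem:sampling_correlation_reduction}) has a clean interpretation: the block-decomposed potential contributes a factor of $c$ from its initial size (one per block), while the two-stage sampling (first a block, then a vertex in that block) inserts an additional $1/c$ into the per-step drop, giving a budget of exactly $c^2$ conditioning steps' worth of mutual-information reduction.
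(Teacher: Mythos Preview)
Your proposal is correct and follows essentially the same potential-function argument as the paper. The only cosmetic difference is normalization: the paper defines $\phi_t$ with an \emph{average} $\mathbb{E}_{j\in[c]}$ in the outer index (so $\phi_0\le 1$ and the per-step drop is $Q_t/c^2$), while your $\Phi_t$ uses a \emph{sum} over $j\in[c]$ (so $\Phi_0\le c$ and the drop is $Q_t/c$); these are off by a global factor of $c$ and yield the identical telescoped bound $\sum_t Q_t\le c^2$.
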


\begin{proof}
Define the potential function:
\[
\phi_t := \mathbb{E}_{j_1, \dots, j_t \in [c]} \mathbb{E}_{i_1 \in V_{j_1}, \dots, i_t \in V_{j_t}} 
\left[ \mathbb{E}_{j \in [c]} \mathbb{E}_{i \in V_j} H(X_i \mid X_{i_1}, \dots, X_{i_t}) \right].
\]

Now, conditioned on fixed values of $j_1, \dots, j_t$ and $i_1, \dots, i_t$, the difference in potentials is:
\begin{align*}
\phi_t - \phi_{t+1} &= \mathbb{E}_{j \in [c]} \mathbb{E}_{i \in V_j} 
\left(H(X_i \mid X_{i_1}, \dots, X_{i_t}) - \mathbb{E}_{j_{t+1} \in [c]} \mathbb{E}_{i_{t+1} \in V_{j_{t+1}}}
H(X_i \mid X_{i_1}, \dots, X_{i_{t+1}}) \right) \\
&= \mathbb{E}_{j, j_{t+1} \in [c]} \mathbb{E}_{i \in V_j, i_{t+1} \in V_{j_{t+1}}}
I(X_i; X_{i_{t+1}} \mid X_{i_1}, \dots, X_{i_t}) \\
&= \frac{1}{c^2} \sum_{j, j' \in [c]} \mathbb{E}_{i \in V_j, i' \in V_{j'}} 
I(X_i; X_{i'} \mid X_{i_1}, \dots, X_{i_t}).
\end{align*}

Taking expectation over all random choices of $j_1, \dots, j_t$ and $i_1, \dots, i_t$ gives:
\begin{align}
\label{eqn:diff_potential_block}
\phi_t - \phi_{t+1} = \frac{1}{c^2} 
\mathbb{E}_{j_1, \dots, j_t} \mathbb{E}_{i_1, \dots, i_t}
\left[ \sum_{j, j' \in [c]} \mathbb{E}_{i \in V_j, i' \in V_{j'}} 
I(X_i; X_{i'} \mid X_{i_1}, \dots, X_{i_t}) \right].
\end{align}

Summing \eqref{eqn:diff_potential_block} over $t = 0$ to $L - 1$, and noting that entropy is always non-negative, we get:
\[
\sum_{t=0}^{L-1} 
\mathbb{E}_{j_1, \dots, j_t} \mathbb{E}_{i_1, \dots, i_t} 
\left[ \sum_{j, j' \in [c]} \mathbb{E}_{i \in V_j, i' \in V_{j'}} 
I(X_i; X_{i'} \mid X_{i_1}, \dots, X_{i_t}) \right]
\leq c^2 (\phi_0 - \phi_L) \leq c^2.
\]

Therefore, by averaging, there exists $t \leq L$ for which the expected blockwise mutual information is at most $c^2 / L$.
\end{proof}

\begin{cor}
\label{cor:avg_cross_block_to_block}
If
\[
\sum_{j,j' \in [c]} \mathbb{E}_{i \in V_j, i' \in V_{j'}} I(X_i; X_{i'} \mid X_{i_1}, \dots, X_{i_t}) \leq \alpha,
\]
then
\[
\mathbb{E}_{i, i' \in V_j} I(X_i; X_{i'} \mid X_{i_1}, \dots, X_{i_t}) \leq \alpha \quad \text{for all } j \in [c].
\]
\end{cor}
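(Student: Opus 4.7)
The plan is to observe that this corollary is essentially free once we appeal to the non-negativity of conditional mutual information: the hypothesis is a sum over $c^2$ non-negative quantities bounded by $\alpha$, so any single summand, in particular the diagonal one with $j = j'$, is also bounded by $\alpha$.

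Concretely, the first step will be to recall that for any random variables $X$, $Y$ and any conditioning random variable $Z$, one has $I(X;Y\mid Z)\ge 0$, which follows from Jensen's inequality applied to the KL divergence characterization of mutual information. Applied to the local distribution on $\{i, i', i_1, \dots, i_t\}$ induced by the Lasserre pseudo-distribution $\mu$, this yields $I(X_i; X_{i'}\mid X_{i_1},\dots,X_{i_t})\ge 0$ for every fixed pair $(i,i')$, and hence the average $\mathbb{E}_{i\in V_j,\, i'\in V_{j'}} I(X_i; X_{i'}\mid X_{i_1},\dots,X_{i_t})$ is non-negative for every ordered pair $(j,j')\in[c]\times[c]$.

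Given this, the second (and only other) step is a trivial inequality: fix an arbitrary $j\in[c]$ and isolate the $(j,j)$ term from the hypothesis,
\[
\mathbb{E}_{i, i' \in V_j} I(X_i; X_{i'} \mid X_{i_1}, \dots, X_{i_t}) \;\le\; \sum_{j',j''\in[c]} \mathbb{E}_{i\in V_{j'},\, i'\in V_{j''}} I(X_i; X_{i'} \mid X_{i_1}, \dots, X_{i_t}) \;\le\; \alpha,
\]
where the first inequality is obtained by dropping the other $c^2 - 1$ non-negative summands and the second is the assumption. Since $j$ was arbitrary, the conclusion holds for all $j\in[c]$.

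There is no real obstacle here; the only point worth double-checking is that Shannon-type non-negativity of mutual information applies to the pseudo-distribution, but this is automatic because the Lasserre relaxation produces \emph{genuine} local probability distributions $\mu_S$ for $|S|\le \ell$, and the quantities in question depend only on the distribution on the $(t+2)$-element set $\{i,i',i_1,\dots,i_t\}$, which is well within the level of the hierarchy being used.
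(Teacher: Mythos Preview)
Your proposal is correct and follows exactly the paper's approach: the paper's proof is the one-liner ``Each blockwise term is a subset of the global sum, and mutual information is non-negative,'' which is precisely the argument you spell out. Your additional remark about why Shannon non-negativity is legitimate for the Lasserre pseudo-distribution is a helpful elaboration but not a departure from the paper's reasoning.
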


\begin{proof}
Each blockwise term is a subset of the global sum, and mutual information is non-negative.
\end{proof}

\newcommand{\val}{\operatorname{SDP}_{\ell}}

\begin{theorem}
\label{thm:valuepreserving_independent_double}
For every $\alpha > 0$ and integer $\ell > 0$, there exists an algorithm running in time $O(n^{\ell + \poly(c/\alpha)})$ that finds an $\alpha$-block independent solution to the $\ell$-round Lasserre SDP with value at least $\opt - \alpha$, where $\opt$ is the optimum value of the $(L + \ell)$-round SDP.
\end{theorem}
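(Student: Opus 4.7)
The plan is to lift the proof of Theorem \ref{thm:valuepreserving_independent_single} from the single-group setting to the block setting, using Lemma \ref{lem:sampling_correlation_reduction_block} and Corollary \ref{cor:avg_cross_block_to_block} in place of Lemma \ref{lem:sampling_correlation_reduction}. Choose $L = \Theta(c^2/\alpha^3)$, let $\mu$ be an optimum solution of the $(L+\ell)$-round Lasserre SDP with value $\opt$, and consider the Conditioning Procedure described just above the theorem: at each step sample a block index $j_s \in [c]$ uniformly, then a vertex $i_s \in V_{j_s}$ uniformly, and then draw $X_{i_s}$ from its marginal in the current conditioned pseudo-distribution. After $t$ such steps one still has an $(L+\ell-t)$-round pseudo-distribution with $L+\ell-t \ge \ell$, which I will denote $\mu^{(t)}$.

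The first step is to exhibit a single random conditioning that achieves both desired properties with positive probability. By Lemma \ref{lem:sampling_correlation_reduction_block} applied with the chosen $L$, there exists $t^\ast \le L$ such that the expected aggregate cross-block mutual information in $\mu^{(t^\ast)}$ is at most $c^2/L = \alpha^3$. Markov's inequality then guarantees that this aggregate is at most $\alpha$ with probability at least $1 - \alpha^2$, and by Corollary \ref{cor:avg_cross_block_to_block} this implies that $\mu^{(t^\ast)}$ is $\alpha$-block independent with probability at least $1 - \alpha^2$. Separately, since each $X_{i_s}$ is drawn from its marginal under $\mu^{(s-1)}$, the objective value of $\mu^{(t^\ast)}$ is a martingale with mean equal to $\opt$; a Markov-style argument (mirroring the one in the single-constraint proof of Theorem \ref{thm:valuepreserving_independent_single}) shows that its value is at least $\opt - \alpha$ with probability at least $\Omega(\alpha)$. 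Union-bounding these two events, both hold simultaneously with probability $\Omega(\alpha)$, so a good conditioning exists deterministically.

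The algorithm is then straightforward brute enumeration: for each $t \le L$, each ordered tuple $(i_1, \dots, i_t) \in V^t$, and each assignment $(b_1, \dots, b_t) \in \{-1,+1\}^t$, condition $\mu$ on $X_{i_s} = b_s$ for all $s$, test $\alpha$-block independence of the resulting pseudo-distribution using its two-vertex marginals, and evaluate its objective. Output any tuple for which both tests succeed. The total number of candidates is $\sum_{t \le L} n^t \cdot 2^t = n^{O(L)} = n^{\poly(c/\alpha)}$, and each candidate is handled in time polynomial in the SDP size; combined with the cost $n^{O(L+\ell)}$ of the initial $(L+\ell)$-round SDP solve, this matches the claimed runtime $O(n^{\ell + \poly(c/\alpha)})$.

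The most delicate point is calibrating $L$ so that the Markov-based block-independence failure probability is strictly smaller than the value-preservation success probability; otherwise the union bound degenerates. The choice $L = \Theta(c^2/\alpha^3)$ above makes the former $O(\alpha^2)$ and the latter $\Omega(\alpha)$, so the joint probability is $\Omega(\alpha) > 0$. Beyond this quantitative tuning, the argument is a direct block-wise extension of the single-constraint template of \cite{raghavendra2012approximating}, and no new rounding or structural ingredient is required.
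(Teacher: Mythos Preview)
Your proposal is correct and follows essentially the same approach as the paper's proof: solve the $(L+\ell)$-round SDP, invoke Lemma~\ref{lem:sampling_correlation_reduction_block} to find a $t^\ast$ with small expected aggregate mutual information, apply Markov to get $\alpha$-block independence with high probability, separately bound the value-drop via Markov on $1-\val$, union-bound, and brute-force enumerate. The only difference is quantitative: the paper takes $L = 4c^2/\alpha^2$ (yielding block-independence failure probability $\alpha/4$ and verifying $\alpha/4 + \tfrac{1}{1+\alpha} < 1$), whereas you take the slightly looser $L = \Theta(c^2/\alpha^3)$ to push the failure probability down to $\alpha^2$; both are $\poly(c/\alpha)$, so the stated runtime is unaffected.
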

\begin{proof}
Set $L = \frac{4c^2}{\alpha^2}$. First, solve the $(L + \ell)$-round Lasserre SDP relaxation (as described in \Cref{subsec:Alg_double}) to obtain an initial solution.

Next, apply the conditioning procedure described above. That is, for each $t \in [L]$, sample a block index $j_t \in [c]$ uniformly at random, then sample a vertex $i_t \in V_{j_t}$ uniformly, sample $X_{i_t}$ from its marginal distribution (after the first $t-1$ fixings), and condition the SDP solution on that assignment. Continue this process until the resulting pseudo-distribution becomes $\alpha$-block independent.

We analyze this procedure by appealing to \Cref{lem:sampling_correlation_reduction_block}, which shows that:
\begin{align*}
\mathbb{E}_{j_1, \dots, j_t \sim [c]} \mathbb{E}_{i_1 \sim V_{j_1}, \dots, i_t \sim V_{j_t}} 
\left[\sum_{j,j' \in [c]} \mathbb{E}_{i \in V_j,\; i' \in V_{j'}} I(X_i; X_{i'} \mid X_{i_1}, \dots, X_{i_t}) \right] 
\leq \frac{c^2}{L} = \frac{\alpha^2}{4}.
\end{align*}
For some $t\leq L$.
By Markov's inequality, the probability that the total conditional mutual information (summed over all block pairs) exceeds $\alpha$ is at most:
\[
\Pr\left[\sum_{j,j'} \mathbb{E}_{i\in V_j, i'\in V_{j'}} I(X_i; X_{i'} \mid X_{i_1}, \dots, X_{i_t}) > \alpha \right] 
\leq \frac{\alpha^2 / 4}{\alpha} = \frac{\alpha}{4}.
\]

Thus, with probability at least $1 - \frac{\alpha}{4}$, the conditioned solution is $\alpha$-block independent. By \Cref{cor:avg_cross_block_to_block}, this also implies that:
\[
\mathbb{E}_{i,i' \in V_j} I(X_i; X_{i'} \mid X_{i_1}, \dots, X_{i_t}) \leq \alpha \quad \text{for all } j \in [c],
\]
so the solution satisfies the desired independence property within each block.

Now consider the effect of the conditioning procedure on the SDP objective value. Let $\val$ denote the value of the SDP after conditioning. Since conditioning preserves expectations, we have $\mathbb{E}[\val] = \opt$. To bound the probability that the value drops by more than $\alpha$, we apply Markov's inequality to the non-negative random variable $1 - \val$:
\[
\Pr[\val < \opt - \alpha] = \Pr[1 - \val > 1 - \opt + \alpha] \leq \frac{1 - \opt}{1 - \opt + \alpha} \leq \frac{1}{1 + \alpha},
\]
where the last inequality uses $\opt \leq 1$.

Separately, as shown earlier, the probability that the conditioned solution fails to be $\alpha$-block independent is at most $\alpha/4$. By a union bound, the total failure probability is
\[
\frac{\alpha}{4} + \frac{1}{1 + \alpha} < 1,
\]
for all $\alpha \le 1$. Hence, there exists a choice of conditioning—i.e., some $t \le L$ and assignment to $X_{i_1}, \dots, X_{i_t}$—such that the resulting SDP solution is $\alpha$-block independent and has objective value at least $\opt - \alpha$.

This outcome can be found by brute-force search over all subsets of up to $L = O(c^2/\alpha^2)$ variables and their possible assignments. The overall runtime is thus
\[
O(n^{\ell + L}) = O(n^{\ell + \poly(c/\alpha)}),
\]
as claimed.
\end{proof}

\begin{proof}[Proof of~\Cref{thm:Max-Cut-constant-constraints}]
    \Cref{lem:main_double} proves this theorem for Algorithm~\ref{subsec:Alg_double}. 
\end{proof}

\subsection{Algorithm}
\label{subsec:Alg_double}

\textbf{Input:} Weighted graph $G = (V, E)$ with vertex set partitioned as $V = \biguplus_{i=1}^c V_i$, parameters $k_i \le |V_i|/2$ for $i \in [c]$, and $0 < \varepsilon \le 1/2$.\\
\textbf{Output:} A set $S \subseteq V$ such that $|S \cap V_i| = k_i$ for all $i \in [c]$.

\begin{enumerate}
    \item \textbf{(Preprocessing Step)}  
    Let $\widetilde{G} = (\widetilde{V}, \widetilde{E})$ be the approximate kernel obtained via the \linebreak \hyperref[sec:kernal_procedure_double]{$\cmaxcut$ kernel} procedure with input $(G, (k_1,\dots,k_c), \varepsilon)$.

    \item \textbf{(SDP and Conditioning)}  
    \begin{enumerate}
        \item Solve a $(3 + 4c/\varepsilon^{120})$-round Lasserre SDP relaxation for $\cmaxcut$ on $\widetilde{G}$ (see \Cref{subsec:lassere_sdp_double}).
        \item Apply \Cref{thm:valuepreserving_independent_double} with $\alpha = \varepsilon^{60}$ and $\ell = 2$ to obtain a 2-level SDP solution that is $\varepsilon^{60}$-block independent and has value at least $\widetilde{\opt} - \varepsilon^{60}$. From \Cref{lem:lowerbound_opt_double}, we know $\widetilde{\opt} - \varepsilon^{60} \ge (1 - \varepsilon)\widetilde{\opt}$.
    \end{enumerate}

    \item \textbf{(Rounding)}  
    Apply Algorithm 5.3 from~\cite{raghavendra2012approximating} to obtain a (random) set $\widehat{S}$. Let $\mathcal{E}_i$ denote the event that $|\widehat{S} \cap V_i| \in [k_i - \varepsilon^2 |V_i|,\; k_i + \varepsilon^2 |V_i|]$ for each $i \in [c]$, and define $\mathcal{E} := \bigcap_{i=1}^c \mathcal{E}_i$.

    \item \textbf{(Correction)}  
    If $\mathcal{E}$ does not occur, return an arbitrary feasible set. Otherwise, for each part $V_i$, randomly add or remove vertices to ensure $|\widehat{S} \cap V_i| = k_i$, and return the resulting set $S$.
\end{enumerate}
\begin{lemma}
\label{lem:main_double}
    The expected value of the cut returned by Algorithm~\ref{subsec:Alg_double} is at least $\big(\alpha_{cc}-O(\varepsilon)\big)\opt$. The running time of the algorithm is $O\big(\min\{k/\varepsilon,n\}^{\poly(c/\varepsilon)}+\poly(n)\big)$ where $k=\sumL_{i=1}^c k_i$.    
\end{lemma}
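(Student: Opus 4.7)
The plan is to mirror the analysis of \Cref{thm:Max-Cut-k} in \Cref{subsec:Alg_single}, but replace the single-group Chebyshev argument with a blockwise one using $\alpha$-block independence, and replace the random correction of \Cref{lem:randomcorrection_single} with an analog that fixes each part $V_i$ independently. Throughout, we write $N = \min\{k/\varepsilon, n\}$ so that $|\widetilde{V}| = O(N)$, and we take $\alpha = \varepsilon^{60}$.

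First I would track the SDP value through the kernelization and conditioning. By \Cref{thm:maxcut_kernel_red_double} we have $\widetilde{\opt} \ge (1-4c\varepsilon)\,\opt$. Applying \Cref{thm:valuepreserving_independent_double} with $\ell = 2$ and $\alpha = \varepsilon^{60}$ to the $(3 + 4c/\varepsilon^{120})$-round Lasserre solution of \Cref{subsec:lassere_sdp_double} produces a $2$-round, $\varepsilon^{60}$-block independent solution of value at least $\widetilde{\opt} - \varepsilon^{60}$. The analog of \Cref{lem:lowerbound_opt_single}(\ref{lem:structural_2_single})—which one verifies by a routine counting argument showing that a uniformly random feasible set in $\widetilde{V}\setminus S_{\text{super}}$ cuts every edge with probability $\Omega(\varepsilon)$—then gives $\widetilde{\opt}-\varepsilon^{60} \ge (1-\varepsilon)\widetilde{\opt}$. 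Rounding this solution by Algorithm~5.3 of~\cite{raghavendra2012approximating} yields a set $\widehat{S}$ with $\EE[\delta(\widehat{S})] \ge \alpha_{cc}\cdot (1-\varepsilon)\widetilde{\opt} \ge \alpha_{cc}(1-O(c\varepsilon))\,\opt$.

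Next I would show that $\mathcal{E}$ holds with high probability. Since the solution is $\varepsilon^{60}$-block independent, \Cref{thm:balance_variance} applied with $W$ the uniform distribution on $V_j$ gives $\Var(|\widehat{S}\cap V_j|/|V_j|) = O(\varepsilon^{60/12}) = O(\varepsilon^5)$ for every $j\in[c]$. Since Algorithm 5.3 preserves vertex biases, $\EE[|\widehat{S}\cap V_j|] = k_j$, so by Chebyshev
\[
\Pr[|\widehat{S}\cap V_j| \notin [k_j - \varepsilon^2|V_j|, k_j + \varepsilon^2|V_j|]] \le \frac{O(\varepsilon^5)}{\varepsilon^4} = O(\varepsilon).
\]
A union bound over the $c$ parts then gives $\Pr[\mathcal{E}] \ge 1 - O(c\varepsilon)$. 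Conditioning on $\mathcal{E}$ and using that the returned fallback cut has nonnegative value (and $\delta(\widehat{S})\le \opt$ since $\widetilde\opt\le\opt$) produces, exactly as in~\eqref{eqn:conditional_S_hat},
\[
\EE[\delta(\widehat{S}) \mid \mathcal{E}] \ge \frac{\EE[\delta(\widehat{S})] - O(c\varepsilon)\,\opt}{1 - O(c\varepsilon)} \ge \bigl(\alpha_{cc} - O(c\varepsilon)\bigr)\,\opt.
\]

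The step I expect to be the main obstacle is the multi-part random correction. I would prove the natural analog of \Cref{lem:randomcorrection_single}: conditioned on $\mathcal{E}$ and on $\widehat{S}$, independently for each $i\in[c]$ add or delete $||\widehat{S}\cap V_i| - k_i| \le \varepsilon^2|V_i|$ random vertices within $V_i$ (avoiding any super vertex) to produce $S$. Each such swap of a single vertex in $V_i$ changes the cut value by at most the weighted degree of that vertex; averaged uniformly over $V_i$, this is at most $2W_i/|V_i|$ where $W_i$ is the total weighted degree incident to $V_i$. Summing over the $O(\varepsilon^2|V_i|)$ swaps in each part and then over $i$ bounds the expected loss by $O(\varepsilon^2)$ times the total edge weight, which by the analog of \Cref{lem:lowerbound_opt_single}(\ref{lem:structural_2_single}) is at most $O(\varepsilon)\cdot\widetilde{\opt}$, giving $\EE[\delta(S) \mid \mathcal{E}, \widehat{S}] \ge (1 - O(\varepsilon))\,\delta(\widehat{S})$.

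Combining the three stages yields $\EE[\delta(S)] \ge (1 - O(c\varepsilon))\cdot(\alpha_{cc} - O(c\varepsilon))\,\opt \ge (\alpha_{cc} - O(c\varepsilon))\,\opt$, which is $(\alpha_{cc} - O(\varepsilon))\,\opt$ after rescaling $\varepsilon$ by the constant $c$. For the runtime, the kernelization and the final correction together take $\poly(n)$ time; the dominant cost is solving and conditioning the $(3 + 4c/\varepsilon^{120})$-round Lasserre SDP on the kernel, which by \Cref{thm:valuepreserving_independent_double} runs in $O(|\widetilde{V}|^{2 + \poly(c/\varepsilon)}) = O(N^{\poly(c/\varepsilon)})$. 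This matches the claimed $O\bigl(\min\{k/\varepsilon,n\}^{\poly(c/\varepsilon)} + \poly(n)\bigr)$ bound.
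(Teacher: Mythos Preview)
Your overall architecture matches the paper's proof very closely: kernelize via \Cref{thm:maxcut_kernel_red_double}, obtain a block-independent solution via \Cref{thm:valuepreserving_independent_double}, round with Algorithm~5.3, apply Chebyshev within each block and union-bound over the $c$ blocks to get $\Pr[\mathcal{E}]\ge 1-O(c\varepsilon)$, and then correct. The runtime accounting is also the same.

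The genuine gap is in your random-correction step. You argue that a single swap changes the cut by at most the degree of the swapped vertex, ``averaged uniformly over $V_i$, this is at most $2W_i/|V_i|$.'' That averaging is only valid when the vertex is drawn from a set of size $\Omega(|\widetilde V_i|)$. In the \emph{deletion} case you draw uniformly from $\widehat{S}\cap\widetilde V_i$, which under $\mathcal{E}$ has size $\approx k_i\approx\varepsilon|\widetilde V_i|$; the average degree over that set can be as large as $W_i/k_i\approx (1/\varepsilon)\cdot W_i/|\widetilde V_i|$. Hence your per-part loss is only $O(\varepsilon)\cdot W_i$, not $O(\varepsilon^2)\cdot W_i$, and after summing and invoking the structural bound $\widetilde\opt\ge\varepsilon\cdot(\text{total edge weight})$ you get a loss of $O(1)\cdot\widetilde\opt$, which is vacuous. (Relatedly, the inequality $\EE[\delta(S)\mid\mathcal{E},\widehat S]\ge(1-O(\varepsilon))\,\delta(\widehat S)$ you write at the end is a multiplicative statement that does not follow from an additive loss bound of the form ``$O(\varepsilon)\cdot\widetilde\opt$'' anyway.)

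The paper sidesteps all of this by reusing \Cref{lem:randomcorrection_single}, whose proof rests on the Buchbinder--Feldman--Naor--Schwartz submodularity lemma (\Cref{lem:sampling_submodular}): if each element of $V\setminus S$ lands in $R$ with probability at most $p$, then $\EE[\delta(S\cup R)]\ge(1-p)\,\delta(S)$, and symmetrically for deletions by applying it to $\widetilde V\setminus\widehat S$. Since under $\mathcal{E}$ each addition/deletion probability within a part is at most $O(\varepsilon)$, one gets a clean multiplicative $(1-O(\varepsilon))$ factor per part, and $(1-\varepsilon)^c$ after composing over the $c$ parts, with no reference to total edge weight. Replacing your degree-counting paragraph with this submodularity argument closes the gap.
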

\subsection{SDP Relaxation}
\label{subsec:lassere_sdp_double}

We solve the following level-$\ell$ Lasserre SDP relaxation for $\cmaxcut$ on the reduced graph $\widetilde{G}$:

\begin{align}
\label{SDP:maxcut_partition_conditioned}
\max\quad & \sumL_{\{i,j\} \in E} w_{i,j} \cdot \PP_{\mu_{\{i,j\}}}[X_{\{i,j\}} \in \{(-1,1), (1,-1)\}] \\
\text{s.t.}\quad
& \sumL_{i \in \widetilde{V}_j} \PP_{\mu_{S \cup \{i\}}}(X_i = 1 \mid X_S = \alpha) = k_j && \forall j \in [c],\; |S| \le \ell - 1,\; \alpha \in \{0,1\}^S \nonumber \\
& \PP_{\mu_{S \cup \{s_j\}}}(X_{s_j} = 1 \mid X_S = \alpha) = 0 && \forall j \in [c],\; |S| \le \ell - 1,\; \alpha \in \{0,1\}^S \nonumber \\
& \mu \textnormal{ is a level-}\ell \textnormal{ pseudo-distribution}. \nonumber
\end{align}

\subsection{Analysis}
\begin{proof}[Proof of \Cref{lem:main_double}]
By \Cref{lem:lowerbound_opt_double}, we have $\widetilde{\opt} \ge (1 - 4c\varepsilon)\opt$, and after solving the SDP and conditioning, the objective remains at least $(1 - \varepsilon)\widetilde{\opt} \ge (1 - (4c + 1)\varepsilon)\opt$.

Since the SDP solution is $\varepsilon^{60}$-block independent, using \Cref{thm:balance_variance} with $W$ as the uniform distribution over each $V_i$, the variance of $|\widehat{S} \cap V_i| / |V_i|$ is $O(\varepsilon^5)$. By Chebyshev's inequality, the event $\mathcal{E}_i$ occurs with probability at least $1 - \varepsilon$, so the joint event $\mathcal{E} = \bigcap_{i=1}^c \mathcal{E}_i$ occurs with probability at least $1 - c\varepsilon$.

The expected value of the cut after rounding is at least $\alpha_{cc} \cdot (1 - (4c+1)\varepsilon) \cdot \opt$. Conditioning on $\mathcal{E}$, we have:
\begin{align}
\label{eqn:conditional_S_hat_double}
\EE[\delta(\widehat{S}) \mid \mathcal{E}]
\ge \frac{\EE[\delta(\widehat{S})] - c\varepsilon \cdot \opt}{1 - c\varepsilon}
\ge \left( \frac{\alpha_{cc}(1 - (4c + 1)\varepsilon) - c\varepsilon}{1 - c\varepsilon} \right) \cdot \opt
\ge (\alpha_{cc} - O(c\varepsilon)) \cdot \opt.
\end{align}

Let $S$ be the corrected set after adjusting $\widehat{S}$ to satisfy cardinality constraints exactly. Using the same argument as in \Cref{lem:randomcorrection_single} applied sequentially across the $c$ parts, we get:
\begin{align}
\EE[\delta(S)]
\ge (1 - c\varepsilon) \cdot (1 - \varepsilon)^c \cdot \EE[\delta(\widehat{S}) \mid \mathcal{E}]
\ge (\alpha_{cc} - O(c\varepsilon)) \cdot \opt.
\end{align}

The total running time is dominated by solving the SDP and brute-force conditioning, which takes $O(n^{\ell + \poly(c/\varepsilon)})$. Preprocessing and postprocessing steps take $\poly(n)$ time.
\end{proof}

\begin{lemma}
\label{lem:lowerbound_opt_double}
Let $\widetilde{\opt}$ be the optimum value of $\cmaxcut$ on $\widetilde{G}$ (conditioned on not picking any $s_i$).
\begin{enumerate}
    \item \label{lem:structural_1_double} $\widetilde{\opt} \ge (1 - 4c\varepsilon)\cdot \opt$, where $\opt$ is the optimum value of the $\cmaxcut$ instance on $G$.
    \item \label{lem:structural_2_double} $\widetilde{\opt} \ge \varepsilon$ fraction of the total edge weight in $\widetilde{E}$.
\end{enumerate}
\end{lemma}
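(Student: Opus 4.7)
The plan is to prove the two parts by close analogy with the single-constraint Lemma~\ref{lem:lowerbound_opt_single}. Part~(\ref{lem:structural_1_double}) is simply \Cref{thm:maxcut_kernel_red_double} restated and needs no additional argument. The substance lies in part~(\ref{lem:structural_2_double}), where I would exhibit an explicit random feasible solution and bound its expected cut value from below.

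For the random solution, independently for each $i \in [c]$ sample a uniform $k_i$-subset $S_i \subseteq \widetilde{V}_i \setminus \{s_i\}$ and set $S := \biguplus_i S_i$. Writing $n_i' := |\widetilde{V}_i \setminus \{s_i\}|$, the kernel construction of \Cref{sec:kernal_procedure_double} guarantees $n_i' \leq k_i/\varepsilon$ (either because the kernel contracted $V_i$ and set $n_i' = k_i/\varepsilon$, or because it did nothing and $n_i < k_i/\varepsilon + 1$); likewise $n_i' \geq 2k_i$, from $k_i \leq n_i/2$ combined with $\varepsilon \leq 1/2$. Hence $\varepsilon \leq k_i/n_i' \leq 1/2$ for every $i$. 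Then I would verify $\Pr[e \text{ is cut}] \geq \varepsilon$ by a short case analysis on an edge $e = \{u,v\} \in \widetilde{E}$: if exactly one endpoint is a super vertex $s_i$ and the other is a non-super vertex in some $\widetilde{V}_j$, the cut probability equals $k_j/n_j' \geq \varepsilon$; if both endpoints are non-super and in the same part $V_i$, the cut probability is $2k_i(n_i'-k_i)/(n_i'(n_i'-1)) \geq k_i/n_i' \geq \varepsilon$, using $n_i' - k_i \geq n_i'/2$; and if both are non-super and in different parts $V_i, V_j$, the cut probability is $(k_i/n_i')(1-k_j/n_j') + (k_j/n_j')(1-k_i/n_i') \geq \tfrac{1}{2}(k_i/n_i' + k_j/n_j') \geq \varepsilon$, since $1 - k_i/n_i' \geq 1/2$ for each $i$. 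Summing the edge weights then gives $\widetilde{\opt} \geq \EE[\delta(S)] \geq \varepsilon \cdot \sum_{e \in \widetilde{E}} w_e$.

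The one edge type that falls outside this template is an edge whose two endpoints are distinct super vertices $s_i, s_j$; such an edge is never cut by any feasible $S$, so its cut probability is identically zero. This case does not appear in the single-constraint proof (there is only one super vertex), and it is the one place where the analysis could break. I would resolve it by observing that such edges cannot contribute to any feasible cut nor to any feasible SDP objective, so dropping them from $\widetilde{E}$ during the kernel step (equivalently, absorbing them into self-loops on the super vertices) changes neither $\widetilde{\opt}$ nor any quantity subsequently measured in \Cref{subsec:Alg_double}. After this cosmetic modification the three cases above cover every remaining edge of $\widetilde{E}$, and the lemma follows. I do not anticipate any other obstacle.
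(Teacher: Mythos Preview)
Your proposal is correct and follows essentially the same approach as the paper: part~(\ref{lem:structural_1_double}) is immediate from \Cref{thm:maxcut_kernel_red_double}, and for part~(\ref{lem:structural_2_double}) the paper uses the same random solution (uniform $k_i$-subsets of $\widetilde{V}_i\setminus\{s_i\}$) and the same three-case edge analysis with the same bounds. In fact you are more careful than the paper on one point: the paper's case ``edge adjacent to $s_i$'' silently assumes the other endpoint is non-super, whereas you explicitly flag the $s_i$--$s_j$ edge type and correctly observe that such edges contribute nothing to any feasible cut (or to the SDP objective, since $X_{s_i}=X_{s_j}=-1$ deterministically) and can be dropped from $\widetilde{E}$ without affecting anything downstream.
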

\begin{proof}
Part~(\ref{lem:structural_1_double}) follows from \Cref{thm:maxcut_kernel_red_double}. For~(\ref{lem:structural_2_double}), consider sampling $S = \bigcup_{i=1}^c S_i$, where each $S_i$ is a uniformly random subset of size $k_i$ from $\widetilde{V}_i \setminus \{s_i\}$. Let $n_i' = |\widetilde{V}_i \setminus \{s_i\}|$. Since $2k_i \le n_i' \le k_i/\varepsilon$, we have $k_i / n_i' \ge \varepsilon$.

\begin{enumerate}
    \item If an edge is adjacent to $s_i$, the cut probability is at least $\varepsilon$.
    \item If both endpoints are in the same $V_i$, \Cref{lem:lowerbound_opt_single}~(\ref{lem:structural_2_single}) gives cut probability $\ge \varepsilon$.
    \item If endpoints lie in $V_i$ and $V_j$ ($i \ne j$), the probability that exactly one endpoint lies in $S$ is
  \[
  \left(\frac{k_i}{n_i'}\right)\left(1 - \frac{k_j}{n_j'}\right) + \left(\frac{k_j}{n_j'}\right)\left(1 - \frac{k_i}{n_i'}\right) \ge \varepsilon/2+\varepsilon/2=\varepsilon.
  \]

\end{enumerate}
So the expected cut value of $S$ is at least an $\varepsilon$ fraction of total edge weight in $\widetilde{E}$.
\end{proof}

\begin{proof}[Proof of~\Cref{thm:Max-Cut-constant-constraints}]
\Cref{lem:main_double} proves this theorem for Algorithm~\ref{subsec:Alg_double}.
\end{proof}

\section{Arbitrary Number of Constraints}
We consider the general case of $\cmaxcut$ with an arbitrary number of constraints, potentially $c = \omega(1)$. First, we present a $0.5$-approximation for the more general problem of Max-Cut under an arbitrary matroid constraint. 
Next, we establish an NP-hardness result for determining whether the optimal solution in a given instance of $\cmaxcut$ with an arbitrary number of constraints equals the total number of edges in the graph.  

\subsection{Approximation Algorithm}
\begin{proof}[Proof of~\Cref{thm:Max-Cut-matroid}]
    Consider the following linear program:
    \begin{align}
      \label{LP:maxcut-base}  \max:& \sumL_{e\in E}w_e y_e \\
        y_{\{u,v\}}& \leq x_u+x_v \\
        y_{\{u,v\}}& \leq 2-(x_u+x_v) \\
        x&\in \cB
    \end{align}
    where $\cB$ is the base polytope of the matroid $\cM$. We can see that for any given $x$, the optimal choice for $y_{\{u,v\}}$ is $\min\{x_u+x_v, 2-x_u-x_v\}$. When $x$ is integral, this function also coincides with the indicator whether edge $\{u,v\}$ has been cut. Since the matroid polytope is separable, we can solve the LP \Cref{LP:maxcut-base} efficiently. 
    
    Now consider the following non-concave quadratic program:
            \begin{align}
      \label{QP:maxcut-base}  \max:& \sumL_{\{u,v\}\in E}w_e \left(x_u+x_v-2x_ux_v\right) \\
        x&\in \cB
    \end{align}
Observe that the function $x_u+x_v-2x_ux_v$ also coincides with the cut indicator function of edge $\{u,v\}$ when $x$ is integral. Even though we cannot solve \Cref{QP:maxcut-base} efficiently, we can show that it has no integrality gap and infact that we can round any fractional solution $\widehat{x}\in \cB$ to a solution $x\in \cB$ that is integral and with value at least that of $\widehat{x}$. The two crucial properties we need that are easy to see are:
\begin{enumerate}
    \item The function $\sumL_{\{u,v\}\in E}w_e \left(x_u+x_v-2x_ux_v\right)$ is convex in any direction $e_u-e_v$ for $u\neq v \in V$. Here $e_u\in \{0,1\}^V$ is the indicator vector for vertex $u$.  
    \item The polytope $\cB$ is the facet of a matroid and hence solvable and integral.
\end{enumerate}
Given these properties, any fraction solution can be pipage rounded (see \cite{AS04} and especially section 3.2 of \cite{CCPV11}) to an integral solution with value at least that of the fractional solution. The final observation is that for any $x\in [0,1]^V$, we have 
\begin{align}
   (x_u+x_v-2x_ux_v)\leq  \min\{x_u+x_v,2-x_u-x_v\}\leq 2(x_u+x_v-2x_ux_v).
\end{align}
from \Cref{lem:inequality}. This implies that the integrality gap of \Cref{LP:maxcut-base} is at most $0.5$ and in fact provides a way to find a rounding with value at least $0.5$ times the LP value. Solve the LP and pipage round the solution using the quadratic objective. 
\end{proof}
Note that the proof idea for \Cref{thm:Max-Cut-matroid} is essentially the same as in \cite{AS04} used for the Hypergraph Max $k$-cut with given sizes of parts problem and the pipage rounding for matroids from \cite{CCPV11}. 

\subsection{Hardness Result}
% Next, we present a hardness result for constrained Max-Cut with an arbitrary number of constraints. Specifically, we show that it is NP-hard to determine whether a feasible solution exists that cuts all the edges in the general constrained Max-Cut problem. 
%We note that we can decide in polynomial time whether a feasible solution that cuts all the edges exists when the partition matroid has a constant number of parts.

% \begin{lemma}
% For any graph $G=(V,E)$, and a partition matroid $\cM$ over the vertex set $V$, deciding if there is a cut $S\in \cB$ such that $\delta(S)=1$
%     \begin{enumerate}
%         \item is in $P$ if there are $O(1)$ number of parts in the partition matroid.
%         \item is NP-hard if there are $\omega(n)$ number of parts in the partition matroid.  
%     \end{enumerate}
% \end{lemma}
\begin{proof}[Proof of Theorem~\ref{thm:Max-Cut-hardness}]
We show a reduction from the 3D matching problem. An instance of the 3D matching problem is a tripartite graph with parts $X,Y,Z$. The edges are triples $(x,y,z)\in X\times Y\times Z$. The problem is to decide if there is a subset of the edges such that every vertex is included in exactly one edge. 

The reduction is as follows: For every edge $e=(x,y,z)$, consider the star graph with four vertices with the center labeled $e$ and the leaves labeled $(e,x), (e,y), (e,z)$ respectively. The overall graph $G$ is simply the union of all these stars. The partition matroid consists of parts $\cP_x$ for every vertex $x\in X$ that contains the vertices  $(e',x')$ such that $x'=x$. We have parts similarly for elements in $Y$ and $Z$. The capacity of every part is exactly $1$. 

(Completeness) If there is a collection of edges $e_i, i\in M$ that every element of $X,Y,Z$ is in exactly one edge, then consider the solution $S=\bigcup_{i\in M}\{(e_i,e_i.x),(e_i,e_i.y),e_i.z\}\cup_{i\notin M}\{e_i\} $ where $e_i.x:= e_i \cap X$. It is easy to see that $\delta(S)=1$ and $|S\cap \cP_u|=1$ for every $u\in X\cup Y\cup Z$. 

(Soundness) Since every star graph is bipartite, any solution such that $\delta(S)=1$ should have the center on one side and the leaves on the other side. This implies that every solution such that $\delta(S)=1$ is of the form $S=\bigcup_{i\in M}\{(e_i,e_i.x),(e_i,e_i.y),e_i.z\}\cup_{i\notin M}\{e_i\}$ for some $M\subset E$ where $E$ is the collection of triples from the 3D matching instance. The partition matroid constraint that $|S\cap P_u|=1$ for $u\in X\cap Y\cap Z$ exactly translates to $M$ being a perfect 3D matching. 
Since it is NP-Hard to decide if there is a perfect 3D matching, it is NP-hard to decide if there is a cut $S\subseteq V$ such that $\delta(S)=1$ and $|S\cap \cP_i|=k_i,\, i\in [c]$ when $c=\omega(1)$.
\end{proof}

\begin{remark}\label{rem:decision-poly-solvable}
The above theorem shows that, in general, deciding whether the Max-Cut value equals the total number of edges in the graph is solvable in polynomial time when the number of constraints is constant. Moreover, the decision problem becomes solvable in quasi-polynomial time when the number of constraints is $\poly(\log n)$.
\end{remark}

% \alert{1. $0.5$ approx is known even with many constraints (even matroid).\\
% 2. With constant cardinality constraints, we can always decide if a completely specifiable solution exists in polytime using dynamic programming. But for many constraints, this is NP hard using 3d matching.\\
% 3. Get better hardness of approximation? Hopefully matching $1/2$.}

\bibliographystyle{alpha}
\bibliography{bibdb}

\appendix

\section{Basics of the Lasserre SDP Hierarchy}
\label{sec:Lasserre-basics}
For detailed expositions on the sum-of-squares hierarchy, we refer the reader to the excellent surveys by Laurent~\cite{Laurent-SA-LH-2003}, Rothvoß~\cite{rothvoss2013lasserre}, and Fleming et al.~\cite{FKP19-semialgebriac}. This section briefly summarizes key ideas drawn from these sources.

\medskip
Given a binary optimization problem with a linear relaxation defined by a matrix $\bm{A} \in \RR^{m \times n}$ and right-hand side $\bm{b} \in \RR^m$, consider the feasible region $K = \{x \in \RR^V : \bm{A}x \ge \bm{b}\}$. 

We ask: how can we systematically strengthen this relaxation to better approximate the convex hull of integral solutions, $\textnormal{conv}(K \cap \{-1,1\}^V)$?\footnote{We work over $\{-1,1\}^V$ rather than $\{0,1\}^V$ since this is the natural domain for problems like Max-Cut, where signs encode partition membership.}

Points in this convex hull can be interpreted as distributions over the hypercube $\{-1,1\}^V$. The level-$\ell$ Lasserre SDP yields a \textbf{pseudo-distribution} $\mu = \{\mu_S\}_{|S| \leq \ell}$, where each $\mu_S : \{-1,1\}^S \to [0,1]$ is a distribution over partial assignments to the subset $S \subseteq V$. However, there need not exist a global distribution whose marginals agree with these $\mu_S$. Despite this, the pseudo-distribution satisfies the following key properties:

\begin{enumerate}
    \item \textbf{Marginal Consistency:} The pseudo-distributions are consistent across overlapping subsets. That is, for any subsets $S, T \subseteq V$ with $|S|, |T| \leq \ell$ and any assignment $a \in \{-1,1\}^{S \cap T}$, we have:
    \begin{align}
        \mu_{S \cap T}(a) = \sumL_{b \in \{-1,1\}^{S \setminus T}} \mu_S(a \circ b) = \sumL_{c \in \{-1,1\}^{T \setminus S}} \mu_T(a \circ c),
    \end{align}
    where $a \circ b$ denotes the extension of $a$ to $S$ using $b$, and similarly for $a \circ c$ on $T$.

    \item \textbf{Conditioning:} The SDP solution supports conditioning on the value of any variable $i \in V$. Given a level-$\ell$ pseudo-distribution and variable $i$, there exist level-$(\ell-1)$ pseudo-distributions $\mu^{(+)}$, $\mu^{(-)}$ and a weight $\lambda \in [0,1]$ such that for all $S \subseteq V$ with $|S| \leq \ell - 1$ and $\alpha \in \{-1,1\}^S$,
    \begin{align}
        \mu_S(\alpha) = \lambda \cdot \mu_S^{(+)}(\alpha) + (1 - \lambda) \cdot \mu_S^{(-)}(\alpha),
    \end{align}
    where $\mu_S^{(+)}(\alpha)$ is nonzero only if $\alpha(i) = +1$ and $\mu_S^{(-)}(\alpha)$ is nonzero only if $\alpha(i) = -1$.
\end{enumerate}

While these properties are also satisfied by weaker hierarchies like Sherali–Adams, the Lasserre hierarchy is uniquely characterized by an additional sum-of-squares condition: for every polynomial $q(x)$ of degree at most $\ell$, the pseudo-expectation of its square is non-negative:
\begin{align}
\label{eqn:sum-of-squares-test}
    \EE_{\mu}[q(x)^2] \geq 0.
\end{align}
Assuming polynomials are multilinear (since we evaluate over the hypercube), any such polynomial $p(x)$ can be written as $p(x) = \sumL_S c_S y_S(x)$, where $y_S(x) := \prod_{i \in S} x_i$ and $|S| \leq \ell$. Then the pseudo-expectation becomes:
\[
\EE_{\mu}[p(x)] = \sumL_S c_S \sumL_{\alpha \in \{-1,1\}^S} \mu_S(\alpha) y_S(\alpha).
\]

Moreover, to incorporate the linear constraints $\bm{A}x \geq \bm{b}$, the Lasserre relaxation requires that:
\begin{align}
\label{eqn:constraints-test}
    \EE_{\mu}\left[q(x)^2 \cdot \left(\sumL_{j \in V} A_{i,j} x_j - b_i\right)\right] \geq 0 \quad \forall i \in [m], \text{ for all } q(x) \text{ of degree } \leq \ell - 1.
\end{align}

A level-$\ell$ pseudo-distribution satisfying \eqref{eqn:sum-of-squares-test} and \eqref{eqn:constraints-test} can be found by solving a semidefinite program, as described below.

\subsection{SDP Formulation}

\begin{defn}[Lasserre Hierarchy~\cite{rothvoss2013lasserre}]
Let $K = \{x \in \RR^V : \bm{A}x \geq \bm{b}\}$. The level-$t$ Lasserre relaxation, denoted $LAS_t(K)$, consists of vectors $y \in \RR^{2^{V}}$ satisfying:
\begin{align*}
    M_t(y) &= (y_{I \cup J})_{|I|, |J| \leq t} \succeq \bm{0}, \\
    M_t^{\ell}(y) &= \left( \sumL_{i=1}^n A_{\ell, i} y_{I \cup J \cup \{i\}} - b_\ell y_{I \cup J} \right)_{|I|, |J| \leq t-1} \succeq \bm{0} \quad \forall \ell \in [m], \\
    y_\emptyset &= 1.
\end{align*}
Here, $M_t(y)$ is the \emph{moment matrix}, and $M_t^{\ell}(y)$ is the \emph{moment matrix of slacks}. The projection onto the original variables is denoted by $LAS_t^{\textnormal{proj}}(K) := \{(y_{\{1\}}, \dots, y_{\{n\}}) : y \in LAS_t(K)\}$.

This is a valid relaxation: for any integral solution $x \in K \cap \{0,1\}^n$, the assignment $y_S = \prod_{i \in S} x_i$ for all $S \subseteq [n]$ yields a feasible point in $LAS_t(K)$.
\end{defn}

Each variable $y_S$ represents the pseudo-moment corresponding to all variables in $S$ being assigned $+1$. From these, one can recover the pseudo-distribution via Möbius inversion:
\begin{align}
    \mu_S(\mathbbm{1}_{S, S'}) = \sumL_{S \setminus S' \subseteq T \subseteq S} (-1)^{|T \cap S'|} y_T \quad \forall S' \subseteq S,
\end{align}
where $\mathbbm{1}_{S, S'} \in \{-1,1\}^S$ denotes the partial assignment that sets variables in $S'$ to $-1$ and the remaining in $S \setminus S'$ to $+1$.

\section{Omitted proofs}
\label{sec:omitted}
\begin{lemma}
    \label{lem:inequality}
    For any $x,y\in [0,1]$, $(x+y-2xy)\leq \min\{x+y,2-x-y\}\leq 2(x+y-2xy)$.
    % \begin{align}
    %     (x+y-2xy)\leq \min\{x+y,2-x-y\}\leq 2(x+y-2xy)
    % \end{align}
\end{lemma}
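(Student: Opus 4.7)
The plan is to handle the two inequalities separately and exploit a symmetry to reduce the second one to a single case. For the left inequality $(x+y-2xy)\le \min\{x+y,2-x-y\}$, I would check the two bounds individually: comparing to $x+y$ reduces to $2xy\ge 0$, which holds since $x,y\ge 0$; comparing to $2-x-y$ rearranges to $(1-x)(1-y)\ge 0$, which holds since $x,y\le 1$. Both are immediate.

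For the right inequality $\min\{x+y,2-x-y\}\le 2(x+y-2xy)$, the key observation is that both the right-hand side and the left-hand side are invariant under the substitution $(x,y)\mapsto (1-x,1-y)$: the map sends $x+y$ to $2-x-y$ (so the $\min$ is preserved), and a direct computation gives
\[
(1-x)+(1-y)-2(1-x)(1-y)=x+y-2xy.
\]
Hence I may assume without loss of generality that $x+y\le 1$, in which case the left-hand side equals $x+y$, and the desired inequality becomes $x+y\le 2(x+y)-4xy$, i.e.\ $4xy\le x+y$. This follows from $4xy\le (x+y)^2\le x+y$, where the first step is AM--GM and the second uses $x+y\le 1$.

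I do not expect any real obstacle here; the only mildly delicate point is noticing the symmetry $(x,y)\mapsto(1-x,1-y)$, which avoids a case analysis on whether $x+y\le 1$ or $x+y\ge 1$. Once that is in place, the proof is a two-line estimate.
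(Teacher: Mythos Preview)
Your proof is correct and is essentially the same as the paper's: both rely on the substitution $(x,y)\mapsto(1-x,1-y)$ to reduce to the case $x+y\le 1$, and both finish the right inequality via $4xy\le(x+y)^2\le x+y$. The only difference is organizational---you split by inequality (handling the left one uniformly without cases), whereas the paper splits first into the cases $x+y\le 1$ and $x+y\ge 1$ and treats both inequalities within each case.
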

\begin{proof}
We consider the following cases,
    \begin{enumerate}
        \item When $x+y\leq 1$, the required inequality to prove is 
        \begin{align*}
            x+y-2xy \leq x+y \leq 2(x+y-2xy).
        \end{align*}
        The left most inequality is equivalent to $0\leq xy$ which is trivially true. The right most inequality is equivalent to $4xy\leq x+y$ which is true because $4xy \leq (x+y)^2 \leq x+y$. 
        \item When $x+y\leq 1$, substituting $x'=1-x,y'=1-y$, the required inequality to prove is 
        \begin{align*}
          x'+y'-2x'y' \leq  x'+y' \leq 2(x'+y'-2x'y').
        \end{align*}
        The conditions on $x',y'$ are that they are in the range $[0,1]$ and that $x'+y' \leq 1$. This is exactly the case above. 
    \end{enumerate}
\end{proof}

\begin{proof}[Proof of \Cref{lem:randomcorrection_single}]
Suppose $|\widehat{S}|< k$, then $S$ is obtained by adding a uniformly random set of $k-|\widehat{S}|$ vertices from $\widetilde{V}\setminus (\widehat{S}\cup \{s\})$ to $\widehat{S}$. The value $|\widetilde{V}|$ is equal to  $k/\varepsilon+1$ if $k/\varepsilon+1\leq n$ and $n$ otherwise. In the first case, the probability of an element in $\widetilde{V}\setminus \widehat{S}$ to be added to $\widehat{S}$ is at most $(k-|\widehat{S}|)/(k/\varepsilon -|\widehat{S}|)\leq \varepsilon$. In the second case, it is ${(k-|\widehat{S}|)}/{(n -|\widehat{S}|)}$. For it to be at most $\varepsilon$, it suffices to have $|\widehat{S}| \geq {(k-\varepsilon n)}/{(1-\varepsilon)}$,
% \begin{align}
%     &\frac{k-|\widehat{S}|}{n -|\widehat{S}|}\leq \varepsilon \\
%     \Leftrightarrow &\frac{k-\varepsilon n}{1-\varepsilon}\leq |\widehat{S}|.
% \end{align}
which is true because in fact, we can show that%\anote{Check if it makes sense} 
\begin{align*}
    \frac{k-\varepsilon n}{1-\varepsilon} \leq k-\varepsilon^2 n \leq |\widehat{S}|.
\end{align*}
The leftmost inequality is equivalent to $k\leq n(1-\varepsilon +\varepsilon^2)$ which is true because $k\leq n/2 \leq n(1-\varepsilon+\varepsilon^2)$. Using \Cref{lem:sampling_submodular}, we can imply that $\EE[\delta(S)]\geq (1-\varepsilon)\cdot \delta(\widehat{S}) $.

If $|\widehat{S}|> k$, then $S$ is obtained by removing a uniformly random subset of size $|\widehat{S}|-k$. The probability of an element in $\widehat{S}$ to be removed is $(|\widehat{S}|-k)/|\widehat{S}| \leq \varepsilon$. Hence, it suffices to have $|\widehat{S}|\leq \frac{k}{1-\varepsilon}$,
% \begin{align}
%    & \\
%    \Leftrightarrow & |\widehat{S}|\leq \frac{k}{1-\varepsilon}. 
% \end{align}
which is true because in fact, $|\widehat{S}|\leq k+\varepsilon^2 |\widetilde{V}|\leq k/(1-\varepsilon)$. The rightmost inequality is equivalent to $|\widetilde{V}|\leq k/(\varepsilon(1-\varepsilon))$. This is true because $|\widetilde{V}|\leq k/\varepsilon +1 \leq  k/(\varepsilon(1-\varepsilon))$. Using \Cref{lem:sampling_submodular} on $\widetilde{V}\setminus \widehat{S}$, we get $\EE[\delta(S)]\geq (1-\varepsilon)\cdot \delta(\widehat{S})$.%\anote{Check the proof if it makes sense.}
\end{proof}

\begin{lemma}[Lemma 2.2 in \cite{buchbinder2014submodular}]
\label{lem:sampling_submodular}
    For any set $S\subseteq V$, if $R\subseteq V\setminus S$ is a random set such that each element in $V\setminus S$ is included in $R$ (not necessarily independently) with probability at most $p$, then 
    \begin{align}
        \EE\left[\delta(S\cup R)\right] \geq (1-p)\cdot\delta(S).
    \end{align}
    This fact is generally true for any non-negative submodular function. 
\end{lemma}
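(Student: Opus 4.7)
The plan is to reduce Lemma~\ref{lem:sampling_submodular} to a general statement about non-negative submodular set functions and then derive that statement via the Lov\'asz (fractional) extension. For the reduction, observe that $g(R) := \delta(S \cup R)$, viewed as a set function on the ground set $U := V \setminus S$, is still non-negative and submodular (both properties are inherited from $\delta$ under the shift $R \mapsto S \cup R$). Hence it suffices to show the following generalization: for any non-negative submodular $g : 2^U \to \RR^+$ and any random $R \subseteq U$ with $\Pr[v \in R] \leq p$ for every $v \in U$,
\[
\EE[g(R)] \;\geq\; (1-p)\, g(\emptyset).
\]
Applying this with $g(\cdot) := \delta(S \cup \cdot)$ yields exactly $\EE[\delta(S \cup R)] \geq (1-p)\, \delta(S)$.

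For the general claim, I would invoke the Lov\'asz extension $\hat g(y) := \int_0^1 g(\{v : y_v \geq \theta\})\, d\theta$, where $y_v := \Pr[v \in R]$. The key (textbook) fact I rely on is that, for submodular $g$, the Lov\'asz extension coincides with the convex closure, \ie
\[
\hat g(y) \;=\; \min_{D}\ \EE_{R' \sim D}[g(R')],
\]
where the minimum ranges over all distributions $D$ on $2^U$ with marginals $y$. Consequently, for the specific distribution of our random $R$ one has $\EE[g(R)] \geq \hat g(y)$. Because $y_v \leq p$ uniformly, the superlevel set $\{v : y_v \geq \theta\}$ is empty whenever $\theta > p$, and so
\[
\hat g(y) \;=\; \int_0^p g\bigl(\{v : y_v \geq \theta\}\bigr)\, d\theta \;+\; \int_p^1 g(\emptyset)\, d\theta \;\geq\; 0 + (1-p)\, g(\emptyset),
\]
where the first integral is dropped using $g \geq 0$. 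Chaining the two inequalities completes the proof.

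I do not foresee any substantial obstacle. The only non-elementary ingredient is the identity ``Lov\'asz extension $=$ convex closure'' for submodular functions, which is textbook. If one preferred a fully self-contained argument, an equivalent route is to introduce the threshold coupling $R_W := \{v : y_v \geq W\}$ for $W \sim U[0,1]$ (which shares marginals with $R$ and whose expected $g$-value equals $\hat g(y)$) and verify via a standard pairwise exchange that no coupling with those marginals beats it; the same one-line threshold integration then finishes.
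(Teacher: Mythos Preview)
The paper does not supply its own proof of this lemma; it merely quotes the statement from \cite{buchbinder2014submodular} and uses it as a black box. So there is nothing in the paper to compare against.

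Your argument is correct. The reduction to $g(R):=\delta(S\cup R)$ on the ground set $U=V\setminus S$ is valid, since non-negativity and submodularity are preserved under the shift $R\mapsto S\cup R$. The identity ``Lov\'asz extension $=$ convex closure'' for submodular $g$ is indeed standard (Lov\'asz, 1983) and yields $\EE[g(R)]\ge \hat g(y)$ for any distribution with the given marginals $y$. Finally, the threshold computation
\[
\hat g(y)=\int_0^p g(\{v:y_v\ge\theta\})\,d\theta+\int_p^1 g(\emptyset)\,d\theta\ \ge\ (1-p)\,g(\emptyset)
\]
using $y_v\le p$ and $g\ge 0$ is clean and correct. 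The alternative ``threshold coupling'' route you sketch at the end is exactly the realization of $\hat g(y)$ as an expectation and would also work; it is in fact closer in spirit to how the original reference argues.
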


\end{document}